\definecolor{redish}{rgb}{0.9, 0.17, 0.31}
\definecolor{fuchs}{rgb}{0.57, 0.36, 0.51}
\theoremstyle{definition}
\newtheorem{theorem}{Theorem}
\newtheorem{corollary}[theorem]{Corollary}
\newtheorem{proposition}[theorem]{Proposition}
\newtheorem{conjecture}{Conjecture}
\newtheorem{definition}[theorem]{Definition}
\newtheorem{example}[theorem]{Example}
\newtheorem{notation}[theorem]{Notation}
\newtheorem{remark}[theorem]{Remark}
\newtheorem{lemma}[theorem]{Lemma}
\newtheorem{problem}{Problem}
\newcommand{\bfe}{{\boldsymbol e}}
\newcommand{\bfg}{{\boldsymbol g}}
\newcommand{\bfh}{{\boldsymbol h}}
\newcommand{\bfu}{{\boldsymbol u}}
\newcommand{\bfv}{{\boldsymbol v}}
\newcommand{\bfx}{{\boldsymbol x}}
\newcommand{\F}{\mathbb{F}}
\newcommand{\N}{\mathbb{N}}
\newcommand{\Z}{\mathbb{Z}}
\newcommand{\mC}{\mathcal{C}}
\newcommand{\supp}{\operatorname{supp}}
\newcommand\jon[1]{{{\textcolor{orange}{Jonathan: #1}}}}
\def\BibTeX{{\rm B\kern-.05em{\sc i\kern-.025em b}\kern-.08em
    T\kern-.1667em\lower.7ex\hbox{E}\kern-.125emX}}
\newcommand{\bcomment}[1]{{\color{blue}#1}}
\newcommand{\pcomment}[1]{{\color{purple}#1}}
\newcommand{\ey}[1]{{\footnotesize [\pcomment{#1}\;\;\bcomment{--Eitan}]}}
\title{\textbf{Serving Every Symbol: \\ All-Symbol PIR and Batch Codes}}
\author{Avital Boruchovsky$^1$, Anina Gruica$^2$, Jonathan Niemann$^2$, and Eitan Yaakobi$^*$}
\affil{$^1$Technion -- Israel Institute of Technology, Israel \\
$^2$Technical University of Denmark, Denmark \\
\vspace{0.5cm}
\small{\texttt{avital.bor@campus.technion.ac.il, \{anigr, jtni\}@dtu.dk, yaakobi@cs.technion.ac.il}}}
\begin{document}
\maketitle

\begin{abstract}
A $t$-all-symbol PIR code and a $t$-all-symbol batch code of dimension $k$ consist of $n$ servers storing linear combinations of $k$ linearly independent information symbols with the following recovery property: any symbol stored by a server can be recovered from~$t$ pairwise disjoint subsets of servers. In the batch setting, we further require that any multiset of size~$t$ of stored symbols can be recovered from~$t$ disjoint subsets of servers. This framework unifies and extends several well-known code families, including one-step majority-logic decodable codes, (functional) PIR codes, and (functional) batch codes.

In this paper, we determine the minimum code length for some small values of $k$ and $t$, characterize structural properties of codes attaining this optimum, and derive bounds that show the trade-offs between length, dimension, minimum distance, and $t$. In addition, we study MDS codes and the simplex code, demonstrating how these classical families fit within our framework, and establish new cases of an open conjecture from~\cite{YAAKOBI2020} concerning the minimal $t$ for which the simplex code is a $t$-functional batch code.
\end{abstract}
\renewcommand{\baselinestretch}{1}\normalsize
\section{Introduction}

\emph{Batch codes} were first introduced by Ishai et al.~\cite{IKOS04}, motivated by load-balancing applications in distributed storage and cryptographic protocols. In their most general form, for integers $1 \le k \le n$, batch codes encode $k$ information symbols into $n$ strings, referred to as buckets. Each bucket contains linear combinations of the information symbols. In this setting, a single user seeks to retrieve a batch of $t$, where $1\le t \le k$, distinct information symbols by reading at most $r$, where $1\le r \le n$, symbols from any given bucket. The primary objective is to minimize the total length of all buckets (the storage overhead) for fixed parameters $k, t, r,$ and $n$.

Ishai et al.~\cite{IKOS04} also proposed a stronger variant known as multiset batch codes. Designed for multi-user settings, this model involves $t$ distinct users, each requesting a specific data item. Since requests may overlap, the total demand constitutes a multiset of the $k$ information symbols. The defining constraint is that each bucket can be accessed by at most one user. A significant special case arises when each bucket contains exactly one symbol. This model is called a \textit{primitive multiset batch code}~\cite{IKOS04} (or simply a \emph{$t$-batch code}) and it is the most studied in the literature.  This model admits a natural algebraic interpretation: $k$ information symbols are encoded into $n$ encoded symbols using a generator matrix $G \in \mathbb{F}_q^{k \times n}$. The matrix~$G$ generates a $t$-batch code if, for every multiset of $t$ requested information symbols $i_1, i_2, \dots, i_t \in [k]$, there exist $t$ pairwise disjoint subsets $R_1, R_2, \dots, R_t \subseteq [n]$ such that the columns of $G$ that are indexed by $R_j$ span the unit vector corresponding to the symbol $i_j$. Throughout this paper, we restrict the term ``batch code'' to refer exclusively to primitive multiset batch codes.

Over the years, numerous works have investigated various extensions and refinements of batch codes. One particularly influential variant is the class of \emph{private information retrieval} (\textit{PIR}) \textit{codes}, introduced in~\cite{FAZELI2015,fazeli2015codes,vardy2023private} as a means to reduce the storage overhead of PIR schemes while maintaining both privacy and low communication complexity. PIR codes can be viewed as a specialized form of batch codes in which each information symbol is required to possess~$t$ mutually disjoint recovery sets. This corresponds to the batch setting in which the $t$ queries are identical, i.e., $i_1 = i_2 = \dots = i_t$.

A further generalization relevant to our work is that of \emph{functional batch codes}, introduced in~\cite{YAAKOBI2020} and later expanded in~\cite{Lev2022}. In this model, the $t$ simultaneous requests may be arbitrary linear combinations of the information symbols rather than individual symbols themselves.

Several additional variants of batch codes have been studied in the literature, though they are less directly related to the focus of this paper. One such variant is that of \emph{combinatorial batch codes}, in which each bucket stores only uncoded copies of the information symbols. These codes have been extensively analyzed in works such as~\cite{stinson2009combinatorial,BSB2011,silberstein2016optimal,shangguan2020sparse,chee2020lower}. A special case with $t = n$, known as \emph{switch codes}, has been explored in~\cite{wang2013codes,wang2015optimal,chee2015combinatorial,buzaglo2017consecutive} in the context of data routing in network switches. More recently,~\cite{kong2024bounds} introduced a related notion called an \emph{$(s,t)$-batch code}, which requires that the multiset of $t$ requested items contain at most $s$ distinct information symbols.

In this paper, we introduce and study generalized versions of batch codes and PIR codes, which we refer to as \emph{all-symbol batch} and \emph{all-symbol PIR} codes. In the all-symbol PIR setting, the goal is to retrieve the same code symbol $t$ times (where the symbol does not need to be an information symbol) using $t$ mutually disjoint recovery sets. In the all-symbol batch setting, the requirement is stronger: for every multiset of $t$ requested code symbols, there must exist $t$ pairwise disjoint recovery sets, one for each requested symbol. These notions extend the traditional PIR and batch frameworks by demanding recoverability not only for information symbols but for all codeword symbols. These definitions unify and generalize several previously studied code properties, including one-step majority-logic decodable codes. Beyond their theoretical interest, these codes are motivated by applications in distributed storage and private information retrieval, where efficient and reliable access to multiple (potentially repeated) codeword symbols is essential.

While all-symbol batch codes have not been studied previously to the best of our knowledge, the notion of all-symbol PIR codes intersects with several previously proposed definitions that arise under specific parameter choices. For example, the $t$ disjoint-repair-group property for $s$ symbols, denoted $(t,s)$-DGRP (Definition 1 in~\cite{KARINGULA_VARDY_WOOTTERS2022}), coincides with the $(t+1)$-all-symbol PIR property when $s = n$. In addition, a one-step majority-logic decodable code with $t$ orthogonal repair sets is precisely a $(t+1)$-all-symbol PIR code (see Chapter 8 in \cite{lin2001error}). These connections are discussed further in Section~\ref{Sec:RelatedWork} and are revisited throughout the paper. 

We focus on two main problems. The first problem is to determine the minimum length of an all-symbol batch/PIR code for given $t$ and $k$. We obtain partial answers to this question for small values of $t$, and discuss general bounds. The second problem is to determine how the parameters of a code influence its potential recovery properties. In particular, we will consider the role that the dual minimum distance plays, and discuss what happens for MDS codes and the simplex code.

The rest of the paper is organized as follows. In 
Section~\ref{Sec:ProbelmStatement}, we define the problems 
studied in this work and establish the necessary notation 
and background. Section~\ref{Sec:3_The_Length} presents 
basic properties of all-symbol PIR and batch codes, 
along with the minimum length required for these codes 
under fixed parameters. In Section~\ref{Sec:Propetries_Of_Known_Codes}, 
we investigate the all-symbol PIR and batch properties of 
several well-known classes of codes. Finally, 
Section~\ref{Sec:Conclustion} provides concluding remarks.

\section{Problem Statement}\label{Sec:ProbelmStatement}

\subsection{Preliminaries and Notation}

Throughout this paper, $k$ and $n$ are integers with $1 \le k \le n$, $q$ is a prime power, $\F_q$ denotes the finite field with $q$ elements, and $t \ge 1$ is an integer. We denote by $[n]$ the set $\{1,\dots,n\}$. For a matrix $G \in \F_q^{k \times n}$, we denote its columns by $\bfg_1, \dots, \bfg_n$. Given a set of vectors $V$, we denote by $\langle V\rangle$ their $\F_q$-span. We let $\bfe_i$ denote the $i$-th unit vector and $\bf1$ the all-one vector, where the dimensions are determined by the context. Finally, for a vector $\bfv$, we denote by~$\bfv^t$ the multiset obtained by repeating $\bfv$ $t$ times. 



In order to introduce the problem this paper focuses on, we need to define what it means for a matrix to \emph{serve} a list of vectors.

\begin{definition}
    Let $G \in \F_q^{k \times n}$ be a matrix and let $\bfv \in \F_q^k$. A set $R \subseteq [n]$ is a \textbf{recovery set} for~$\bfv$ if $v \in \langle \bfg_j : j \in R\rangle$. For a multiset $L:=\{\bfv_1,\dots,\bfv_t\} \subseteq \F_q^k$, we say that $G$ can \textbf{serve} this list, if there exist pairwise disjoint recovery sets $R_1,\dots,R_t \subseteq [n]$ with the property that $\bfv_i \in \langle \bfg_j : j \in R_i \rangle$.
\end{definition}

We are interested in generator matrices of linear codes that can serve special types of lists of vectors. More precisely, we are interested in the following cases.

\begin{definition}
An $\F_q$-linear code $\mC$ in $\F_q^n$ of dimension $k$ is 
\begin{itemize}
    \item[(i)] a \textbf{$t$-PIR} (\textbf{P}) code if there exists a generator matrix $G \in \F_q^{k \times n}$ of $\mC$ that can serve the list $L=\{\bfe_i^t\}$ for all $i \in [k]$; 
    \item[(ii)] a \textbf{$t$-batch} (\textbf{B}) code if there exists a generator matrix $G \in \F_q^{k \times n}$ of $\mC$ that can serve any list $L=\{\bfe_1^{t_1},\dots,\bfe_k^{t_k}\}$ with $t_1+\dots + t_k=t$;
    \item[(iii)] a \textbf{$t$-functional PIR} (\textbf{FP}) code if there exists a generator matrix $G \in \F_q^{k \times n}$ of $\mC$ that can serve the list $L=\{\bfv^t\}$ for all $\bfv \in \F_q^k$;
    \item[(iv)] a \textbf{$t$-functional batch} (\textbf{FB}) code if there exists a generator matrix $G \in \F_q^{k \times n}$ of $\mC$ that can serve the list $L=\{\bfv_1,\dots,\bfv_t\}$ for all $\bfv_1,\dots,\bfv_t \in \F_q$;
    \item[(v)] a \textbf{$t$-all-symbol PIR} (\textbf{ASP}) code if there exists a generator matrix $G \in \F_q^{k \times n}$ of $\mC$ that can serve the list $L=\{\bfg_i^t\}$ for all $i \in [n]$; 
    \item[(vi)] a \textbf{$t$-all-symbol batch} (\textbf{ASB}) code if there exists a generator matrix $G \in \F_q^{k \times n}$ of $\mC$ that can serve any list $L=\{\bfg_1^{t_1},\dots,\bfg_n^{t_n}\}$ with $t_1+\dots + t_n=t$.
\end{itemize}
\end{definition}

We say that a matrix satisfies a given property if it can be used to prove that the corresponding code meets one of the above definitions. For example, a full-rank matrix $G \in \F_q^{k \times n}$ that can serve, for every $i \in [n]$, the list $L=\{\bfg_i^t\}$ is said to satisfy the $t$-all-symbol PIR property. 

In the next lemma, we show that if one generator matrix of a code can serve every list of size $t$ formed from its columns, then this property holds for any generator matrix of the same code. However, the specific lists of vectors that can be served may differ, since they depend on the actual columns of the chosen generator matrix. 

\begin{lemma}\label{Invariance_Of_Mul_By_Invertible}
    Let $G \in \F_q^{k \times n}$, and let $M \in \F_q^{k \times k}$ be an invertible matrix. Then $G$ satisfies the $t$-all-symbol PIR/batch property if and only if $MG$ satisfies the same property.
\end{lemma}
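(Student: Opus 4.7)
The plan is to exploit the fact that left-multiplication by an invertible matrix $M$ is an $\F_q$-linear automorphism of $\F_q^k$, and therefore preserves all linear-algebraic information relevant to the definition of ``serving'' a list. The key observation is that the columns of $MG$ are precisely $M\bfg_1,\dots,M\bfg_n$, and for any subset $R \subseteq [n]$,
$$\langle M\bfg_j : j \in R\rangle \;=\; M \cdot \langle \bfg_j : j \in R\rangle.$$
Hence, for any vector $\bfv \in \F_q^k$ and any $R \subseteq [n]$,
$$\bfv \in \langle \bfg_j : j \in R\rangle \iff M\bfv \in \langle M\bfg_j : j \in R\rangle.$$
In particular, $R$ is a recovery set for $\bfg_i$ with respect to $G$ if and only if $R$ is a recovery set for $M\bfg_i$ with respect to $MG$.

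For the $t$-all-symbol PIR direction, I would assume $G$ satisfies the property, fix $i \in [n]$, and take $t$ pairwise disjoint recovery sets $R_1,\dots,R_t$ for $\bfg_i$ in $G$. By the observation, these same sets are $t$ pairwise disjoint recovery sets for the $i$-th column $M\bfg_i$ of $MG$, so $MG$ satisfies the $t$-all-symbol PIR property as well. For the $t$-all-symbol batch direction, I would take an arbitrary multiset $L = \{(M\bfg_1)^{t_1},\dots,(M\bfg_n)^{t_n}\}$ of columns of $MG$, consider the corresponding multiset $L' = \{\bfg_1^{t_1},\dots,\bfg_n^{t_n}\}$ of columns of $G$, and again transport the disjoint recovery sets serving $L'$ in $G$ over to disjoint recovery sets serving $L$ in $MG$.

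The reverse implication follows by symmetry: $MG$ satisfying the property implies $G = M^{-1}(MG)$ does, since $M^{-1}$ is also invertible. Since this argument is a direct verification using only that $M$ is a linear bijection, there is no substantive obstacle; the only minor point worth making explicit is that both $G$ and $MG$ are full-rank generator matrices of the same code, so the statement is well-posed at the level of codes.
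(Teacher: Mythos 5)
Your proposal is correct and follows essentially the same route as the paper's proof: both arguments transport the disjoint recovery sets through the linear bijection $M$ using the fact that $\langle M\bfg_j : j \in R\rangle = M\langle \bfg_j : j \in R\rangle$, and both obtain the reverse implication via $M^{-1}$. The only cosmetic difference is that you spell out both the PIR and batch cases, whereas the paper proves the batch case and declares the PIR case analogous.
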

\begin{proof}
   We only prove the case of the $t$-all-symbol batch property, the PIR case can be proven analogously. Let $\{\bfv_{1},\bfv_{2},\ldots,\bfv_{t}\}$ be a multiset of columns of $MG$. Set $\bfu_i=M^{-1}\bfv_i$, then $\bfu_i$ is a column of $G$. As $G$ satisfies the $t$-all-symbol batch property, there exist $t$ pairwise disjoint subsets $R_1,R_2,\ldots,R_t$ of $[n]$ such that $\bfu_i\in \langle \bfg_j : j \in R_i\rangle$ for each $i\in[t]$. Multiplying by $M$ gives $\bfv_i\in \langle M\bfg_j : j \in R_i \rangle$ for each $i\in[t]$. Thus, the matrix $MG$ (that has as columns $\{M\bfg_1,\dots,M\bfg_n\}$) satisfies the $t$-all-symbol batch property as well.
\end{proof}

The preceding lemma establishes that the properties of being $t$-all-symbol PIR or $t$-all-symbol batch are indeed code properties. To facilitate the study of minimum code lengths, we introduce the following notation.

\begin{notation} \label{not:mins}
Let $k, t \in \N$ and $q$ be a prime power. We define the 
optimal lengths for the various code types as follows:
\begin{align*}
    P(k,t,q)&:=\min\{n\in \N : \exists \; k\text{-dim. }t\text{-P code in $\F_q^n$}\}, \\
    B(k,t,q)&:=\min\{n\in \N : \exists \;  k\text{-dim. }t\text{-B code in $\F_q^n$}\}, \\
    FP(k,t,q)&:=\min\{n\in \N : \exists \;  k\text{-dim. }t\text{-FP code in $\F_q^n$}\}, \\
    FB(k,t,q)&:=\min\{n\in \N : \exists \; k\text{-dim. }t\text{-FB code in $\F_q^n$}\}, \\
    ASP(k,t,q)&:=\min\{n\in \N : \exists \;  k\text{-dim. }t\text{-ASP code in $\F_q^n$}\}, \\
    ASB(k,t,q)&:=\min\{n\in \N : \exists \;  k\text{-dim. }t\text{-ASB code in $\F_q^n$}\}.
\end{align*}
\end{notation}

One of our goals is to study $ASP(k,t,q)$ and $ASB(k,t,q)$, and to relate them to $P(k,t,q)$, $B(k,t,q)$, $FP(k,t,q)$ and $FB(k,t,q)$. In the sequel, we say that a matrix $G \in \F_q^{k \times n}$ \emph{realizes} $ASP(k,t,q)$, if $n=ASP(k,t,q)$ and $G$ satisfies the $t$-all-symbol PIR property. This terminology 
is applied analogously to the other properties defined above.

\subsection{Previous Work}\label{Sec:RelatedWork}
Throughout the years, the study of PIR and batch codes, along with their 
generalizations, has garnered significant interest. Given the diversity of notations 
employed across the literature, we provide here a 
unified summary of the foundational results that serve 
as the basis for our work.

The concept of PIR codes was introduced in 
\cite{fazeli2015codes}, with a more comprehensive 
treatment in \cite{FAZELI2015} and a final journal 
version in \cite{vardy2023private}. These works 
primarily characterize the asymptotic behavior of 
PIR codes, establishing that $\lim_{t \to \infty} 
\frac{P(k,t,2)}{t} = 1$. While such asymptotics 
are outside the scope of this paper, these foundational 
works also established subadditivity 
properties for $P(k,t,q)$ and several general bounds 
that we will apply in our derivations (we will explicitly identify and cite 
these bounds as they are applied).

A central property of $t$-PIR codes is that they must 
possess a minimum distance of at least $t$ (see, e.g., 
\cite{Skachek2018}). Consequently, the Singleton 
bound provides a universal lower bound on the 
optimal length:
\begin{equation}
    t + k - 1 \leq P(k,t,q). \label{eq:min_dis_atleast_t}
\end{equation}

For the specific case of dimension $k=2$, the lower 
bound $P(k,t,2) \geq {(2^k-1)t}/{2^{k-1}}$
(see, e.g., \cite[Theorem 9]{vardy2023private}) meets the exact value 
for functional batch codes established in 
\cite[Corollary 3.5]{KILIC2025}, leading to the 
following characterization.

\begin{lemma}\label{lem:k=2,q=2}
    For $k=2$ and $q=2$, it holds that:
    \[
    P(2,t,2) = B(2,t,2) = FP(2,t,2) = FB(2,t,2) 
    = t + \left\lceil \frac{t}{2}\right\rceil.
    \]
\end{lemma}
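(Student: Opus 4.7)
The plan is to sandwich all four quantities by combining the two bounds already cited in the excerpt with the natural chain of inequalities between the code families. Every $t$-FB code is simultaneously a $t$-B code, a $t$-FP code, and a $t$-P code, because restricting the class of lists that the generator matrix must serve can only weaken the requirement: a $t$-FB generator serves \emph{every} multiset of $t$ vectors in $\F_q^k$, so in particular it serves the multisets $\{\bfe_1^{t_1},\dots,\bfe_k^{t_k}\}$ and the multisets $\{\bfv^t\}$. This immediately yields
\[
P(2,t,2) \le B(2,t,2) \le FB(2,t,2)
\quad\text{and}\quad
P(2,t,2) \le FP(2,t,2) \le FB(2,t,2).
\]
So it suffices to prove a matching lower bound on $P(2,t,2)$ and a matching upper bound on $FB(2,t,2)$.

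For the lower bound, I would invoke the cited bound $P(k,t,2) \ge (2^k-1)t/2^{k-1}$ from \cite[Theorem 9]{vardy2023private} with $k=2$, which gives $P(2,t,2) \ge 3t/2$. Since $P(2,t,2)$ is an integer, this sharpens to $P(2,t,2) \ge \lceil 3t/2\rceil$, and a quick case split on the parity of $t$ shows $\lceil 3t/2\rceil = t + \lceil t/2\rceil$ (writing $t=2m$ gives $3m = 2m+m$, and $t=2m+1$ gives $3m+2 = (2m+1)+(m+1)$).

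For the upper bound, I would appeal directly to the exact value $FB(2,t,2) = t + \lceil t/2\rceil$ established in \cite[Corollary 3.5]{KILIC2025}. Chaining
\[
t+\left\lceil\tfrac{t}{2}\right\rceil \le P(2,t,2) \le FP(2,t,2), B(2,t,2) \le FB(2,t,2) = t+\left\lceil\tfrac{t}{2}\right\rceil
\]
forces all four quantities to coincide with $t+\lceil t/2\rceil$.

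Since both ingredients already sit in the literature, there is no genuine obstacle; the only content is recognizing that the hierarchy $FB \Rightarrow B \Rightarrow P$ and $FB \Rightarrow FP \Rightarrow P$ collapses the four quantities whenever the Vardy-style lower bound meets the Kilic-style upper bound, which is exactly what happens at $(k,q)=(2,2)$.
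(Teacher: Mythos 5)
Your proof is correct and follows essentially the same route as the paper: the paper also obtains Lemma~\ref{lem:k=2,q=2} by observing that the lower bound $P(k,t,2) \ge (2^k-1)t/2^{k-1}$ from \cite[Theorem 9]{vardy2023private} meets the exact value of $FB(2,t,2)$ from \cite[Corollary 3.5]{KILIC2025}, and the chain $P \le B, FP \le FB$ squeezes all four quantities together. Your parity check confirming $\lceil 3t/2\rceil = t + \lceil t/2\rceil$ is a nice explicit touch that the paper leaves implicit.
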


The distinctions between different code classes often 
diminish for small values of $t$. Specifically, for 
$t=3$, the requirements for batch and PIR codes 
coincide (see \cite[Lemmas 3 and 4]{VARDY_YAAKOBI_2016}). Inspired by this, and following similar ideas, we prove a more general result in Lemma~\ref{lemma:t=3_pir_iff_batch}.

The optimal length of a 3-all-symbol batch code
is given by the following combinatorial expression
\cite{FAZELI2015, fazeli2015codes, 
KARINGULA_VARDY_WOOTTERS2022, Vardy2016}:

\begin{lemma}\label{lem:B=P=k+r_For_t=3}
    For $t=3$, $P(k,3,q) = B(k,3,q) = k + r$, 
    where $r = \min\{i \in \mathbb{N} : \binom{i}{2} \geq k\}$.
\end{lemma}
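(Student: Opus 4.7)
The plan is to combine the equality $P(k,3,q) = B(k,3,q)$ (from \cite[Lemmas 3 and 4]{VARDY_YAAKOBI_2016}) with a construction yielding the upper bound and a combinatorial argument for the lower bound, both targeting the value $k+r$.

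For the upper bound, I would construct an explicit $3$-batch code of length $k+r$. Fix an injection $\iota : [k] \hookrightarrow \binom{[r]}{2}$, which exists since $\binom{r}{2} \ge k$ by definition of $r$, and take $G = [\,I_k \mid P\,]$, where the $j$-th parity column ($j \in [r]$) is $\bfp_j := \sum_{i :\, j \in \iota(i)} \bfe_i$. For each information symbol $\bfe_i$ with $\iota(i) = \{a,b\}$, three pairwise disjoint recovery sets are $\{i\}$, $\{k+a\} \cup \{i' \neq i : a \in \iota(i')\}$, and $\{k+b\} \cup \{i' \neq i : b \in \iota(i')\}$: disjointness follows because if both $a$ and $b$ lie in $\iota(i')$ then $\iota(i') = \{a,b\} = \iota(i)$, forcing $i' = i$. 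Extending this to arbitrary multisets of size three requires only a short matching argument on which of the two parity columns is assigned to each request, so the same construction also serves as a $3$-batch code.

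For the lower bound $P(k,3,q) \ge k+r$, write $n = k + s$ and, using Lemma~\ref{Invariance_Of_Mul_By_Invertible}, assume $G = [\,I_k \mid P\,]$ with parity columns $\bfp_1, \dots, \bfp_s$. For each $i \in [k]$, fix three disjoint recovery sets for $\bfe_i$. A recovery set entirely contained in $[k]$ must contain $i$, so at most one of the three can avoid the parity part; the remaining two intersect $\{k+1, \dots, k+s\}$ in disjoint nonempty subsets $A_i, B_i \subseteq [s]$. Picking representatives $\alpha_i \in A_i$, $\beta_i \in B_i$, I would aim to show that the unordered pairs $\{\alpha_i, \beta_i\}$ for $i \in [k]$ can be arranged to be pairwise distinct, which would give $k \le \binom{s}{2}$, equivalent to $n \ge k + r$.

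The hard part will be proving this injectivity claim: a priori, the same pair of parity columns could \emph{serve} two different information positions, because the recovery sets for $\bfe_i$ and $\bfe_j$ operate independently. My first approach would be to pass to recovery sets that are minimal in total size and argue that a coincidence $\{\alpha_i, \beta_i\} = \{\alpha_j, \beta_j\}$ forces a nontrivial linear dependence among $\bfp_{\alpha_i}, \bfp_{\beta_i}$ together with certain information columns that contradicts the minimality. A cleaner alternative would be to dualize: the $3$-PIR property produces, for each coordinate $i$, three parity-check codewords through $i$ with pairwise disjoint supports outside $i$, and counting such triangle-like configurations on the $s$ redundancy coordinates of $\mathcal{C}^\perp$ should directly yield $k \le \binom{s}{2}$, finishing the proof.
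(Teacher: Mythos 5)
Your upper bound is correct and is essentially the standard construction: the matrix $G=[\,I_k\mid P\,]$ whose redundancy rows are $k$ distinct weight-two vectors in $\F_q^r$ (with all nonzero entries equal to $\pm 1$) is exactly what the paper builds in the proof of Proposition~\ref{prop:t=3}, and your verification of the three disjoint recovery sets is sound. The reduction from $3$-batch to $3$-PIR is also fine; it is the same elementary pigeonhole observation that appears in Lemma~\ref{lemma:t=3_pir_iff_batch}, and it needs only one line (if the request is $\{\bfe_i^2,\bfe_j\}$, at most one of the three disjoint recovery sets of $\bfe_i$ can contain $j$). Note, though, that the paper does not prove this lemma at all: it is quoted from \cite{FAZELI2015, fazeli2015codes, KARINGULA_VARDY_WOOTTERS2022, Vardy2016} and \cite{VARDY_YAAKOBI_2016}, so there is no internal proof to check you against.

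The genuine gap is exactly where you flag it: the lower bound. You define, for each $i\in[k]$, disjoint nonempty parity footprints $A_i,B_i\subseteq[s]$ of two of the recovery sets of $\bfe_i$, and then want to select $\alpha_i\in A_i$, $\beta_i\in B_i$ so that the unordered pairs $\{\alpha_i,\beta_i\}$ are pairwise distinct. This is a system-of-distinct-representatives claim on the family $\mathcal{P}_i:=\{\{a,b\}: a\in A_i,\, b\in B_i\}$, and you never verify Hall's condition. Simply picking arbitrary representatives does fail (two indices can both ``see'' the same pair $\{a,b\}$ when their $A_i,B_i$ overlap), and the existence of a consistent choice is precisely the hard combinatorial core of the bound — it does not follow from disjointness of $R_i^1,R_i^2$ for a single $i$, since those constraints decouple across different $i$. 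Your two fallback routes (a minimality contradiction, or dualizing to count parity-check ``triangles'') are plausible places to look, but neither is carried out, and the known proofs in the cited literature (Rao--Vardy and the $(t,s)$-DGRP bound) do not take the SDR route you sketch. As written, the proposal therefore does not establish $P(k,3,q)\ge k+r$, which is the only nontrivial inequality in the statement.
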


Expanding upon this, it was shown in 
\cite{vardy2023private} and \cite{FAZELI2015} that 
appending a parity column to the $t=3$ construction 
yields an optimal $t=4$ binary code:

\begin{lemma}\label{lem:B=P=k+r_For_t=4}
   In the binary case $q=2$, we have: 
   $B(k,4,2) = P(k,4,2) = P(k,3,2) + 1$.
\end{lemma}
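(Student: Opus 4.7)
I would prove two matching inequalities: the lower bound $P(k,4,2)\ge P(k,3,2)+1$ and the upper bound $B(k,4,2)\le P(k,3,2)+1$. Combined with the immediate inequality $P(k,4,2)\le B(k,4,2)$, they force both $P(k,4,2)$ and $B(k,4,2)$ to equal $P(k,3,2)+1$.

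For the lower bound, I would puncture. Let $G\in\F_2^{k\times n}$ generate a $4$-PIR code of minimum length $n = P(k,4,2)$. By Equation~(\ref{eq:min_dis_atleast_t}) the minimum distance is at least $4$, in particular $\ge 2$, so puncturing any single coordinate $j^*$ preserves the dimension. For every $i\in[k]$ the four pairwise disjoint recovery sets for $\bfe_i$ in $G$ contain $j^*$ in at most one of them, so at least three disjoint recovery sets survive in the punctured code. Hence the punctured code is a $3$-PIR code of length $n-1$, giving $n-1\ge P(k,3,2)$.

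For the upper bound, I would invoke Lemma~\ref{lem:B=P=k+r_For_t=3} to fix an optimal matrix $G\in\F_2^{k\times n}$ realizing $B(k,3,2) = P(k,3,2) = k+r$, and append the parity column $\bfp := \sum_{j=1}^{n}\bfg_j$ to form $G' := [\,G\mid\bfp\,]\in\F_2^{k\times(n+1)}$. The proof that $G'$ is a $4$-batch code rests on the $\F_2$-complementation identity: if $R\subseteq[n]$ is a recovery set of $\bfv$ in $G$, then $\{n+1\}\cup([n]\setminus R)$ is a recovery set of $\bfv$ in $G'$, since $\bfp + \sum_{j\notin R}\bfg_j = \sum_{j\in R}\bfg_j = \bfv$. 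Given a request multiset $M = \{\bfe_{j_1},\bfe_{j_2},\bfe_{j_3},\bfe_{j_4}\}$, I would split into cases on its multiplicity type. For type $(1,1,1,1)$ the singletons $\{j_a\}$ in the identity part of $G$ serve $M$ directly. For types $(4)$ and $(2,2)$ the sum $\sum_a\bfe_{j_a}$ vanishes, so I would pick three disjoint $R_1,R_2,R_3\subseteq[n]$ serving any three of the requests via the $3$-batch property and set $R_4 := \{n+1\}\cup([n]\setminus(R_1\cup R_2\cup R_3))$; by complementation, $\sum_{j\in R_4}\bfg_j = \bfe_{j_1}+\bfe_{j_2}+\bfe_{j_3}$, which equals the fourth request thanks to the $\F_2$ identities $3\bfe = \bfe$ (type $(4)$) and $2\bfe = 0$ (type $(2,2)$).

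The main obstacle lies in the remaining types $(3,1)$ and $(2,1,1)$, in which $\sum_a\bfe_{j_a}\neq 0$ and the naive parity complement produces the wrong vector. Via the complementation identity, the $4$-batch condition reduces to finding pairwise disjoint $R_1,R_2,R_3,U\subseteq[n]$ such that $\sum_{j\in R_a}\bfg_j = \bfe_{j_a}$ for $a\in\{1,2,3\}$ and $\sum_{j\in U}\bfg_j = \sum_{a=1}^{4}\bfe_{j_a}$, a condition strictly stronger than the $3$-batch property of $G$. I would resolve it by assigning the singleton sets $\{j\}$ from the identity part of $G$ to the distinct indices requested in $M$, using the freedom to permute which request is routed through the parity, and then exploiting the two pair-based alternative recovery sets that every information symbol admits in the specific optimum $3$-batch construction of length $k+r$ underlying Lemma~\ref{lem:B=P=k+r_For_t=3} to locate, among the unused parity columns, a set $U$ summing to the residual weight-$2$ vector.
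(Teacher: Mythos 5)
Your overall architecture is sound (puncturing for the lower bound, parity--column complementation for the upper bound), and it essentially mirrors the construction the paper attributes to~\cite{vardy2023private,FAZELI2015} and generalizes in Lemma~\ref{Lem:G'IsBatch} and Corollary~\ref{COR:BatchFor_k=4_general_q}. The lower bound is correct as written: $d(\mC)\ge 4 > 1$ ensures no codeword is supported on a single coordinate, so puncturing preserves dimension, and since the four disjoint recovery sets of $\bfe_i$ touch the punctured coordinate at most once, three survive. Your complementation identity also correctly disposes of types $(1,1,1,1)$, $(4)$, and $(2,2)$; in fact your handling of $(2,2)$ --- serve $\bfe_a,\bfe_a,\bfe_b$ by the $3$-batch property and note the complement recovers $\bfe_a+\bfe_a+\bfe_b=\bfe_b$ --- is cleaner than the swap argument the paper uses in Lemma~\ref{Lem:G'IsBatch}.

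The genuine gap is your treatment of types $(3,1)$ and $(2,1,1)$. Your plan --- ``permute which request is routed through the parity'' and then hunt in ``the specific optimum $3$-batch construction'' for a leftover set $U$ summing to $\sum_a\bfe_{j_a}$ --- is not an argument, and it needlessly ties the proof to one construction. What you are missing is that the complementation identity already hands you a \emph{partition}: for the repeated symbol $\bfe_a$, take three disjoint recovery sets $R_1,R_2,R_3\subseteq[n]$ from the $3$-PIR property and set $R_4:=\{n+1\}\cup([n]\setminus(R_1\cup R_2\cup R_3))$; then $R_1,\dots,R_4$ partition $[n+1]$ and \emph{each} recovers $\bfe_a$, since $\sum_{j\in R_4}\bfg_j=\bfe_a+\bfe_a+\bfe_a=\bfe_a$ over $\F_2$. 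For the remaining $4-s$ distinct requests $\bfe_{b_1},\dots,\bfe_{b_{4-s}}$ (each $\ne\bfe_a$), use the identity-column singletons $\{b_m\}$. Each singleton lies in exactly one cell $R_i$ of the partition and cannot equal it (a cell equal to $\{b_m\}$ would force $\bfe_{b_m}=\bfe_a$), so at most $4-s$ cells are hit and at least $s$ cells remain untouched to serve the $s$ copies of $\bfe_a$. This is the ``at most one repeated symbol'' argument the paper invokes via~\cite[Lemma~3]{VARDY_YAAKOBI_2016}; once you add it, your proof is complete and, combined with your $(2,2)$ shortcut, arguably tidier than the paper's.
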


The concepts of functional PIR and batch codes were 
introduced in \cite{YAAKOBI2020}, sparking 
considerable subsequent research into their 
optimal lengths. Regarding the functional PIR case 
for $t=3$, the following bounds and exact values 
were established in \cite[Corollary 16]{YAAKOBI2020}: 
\begin{lemma}\label{lem:FP_for_t=3}
 For any $m \geq 2$, we have that  \begin{align*}
     &FP(2m,3,2) = 3m + 2,\\
     3m + 3 \leq &FP(2m + 1,3,2) \leq 3m + 4.
 \end{align*}  
\end{lemma}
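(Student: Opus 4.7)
The plan is to prove the equality $FP(2m,3,2)=3m+2$ by a matching upper and lower bound, and the two-sided estimate $3m+3\le FP(2m+1,3,2)\le 3m+4$ by analogous arguments adapted to odd dimension.

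For the upper bound $FP(2m,3,2)\le 3m+2$, I would construct an explicit generator matrix $G\in\F_2^{2m\times(3m+2)}$ whose columns consist of the $2m$ standard basis vectors $\bfe_1,\ldots,\bfe_{2m}$ together with $m+2$ parity columns. A natural candidate takes the $m$ pairing columns $\bfe_{2i-1}+\bfe_{2i}$ for $i\in[m]$ and two additional ``global'' columns such as $\sum_{i=1}^{m}\bfe_{2i-1}$ and $\sum_{i=1}^{m}\bfe_{2i}$. For each $\bfv\in\F_2^{2m}$ one then exhibits three pairwise disjoint recovery sets by case analysis on the support pattern of $\bfv$: one recovery using only identity columns, a second using pairing columns to replace pairs of identity columns, and a third involving the two global parity columns. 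For the odd case, I would augment the even-case construction with one extra basis column $\bfe_{2m+1}$ and one extra parity column involving it (for example, $\bfe_{2m+1}+\sum_{i=1}^{m}\bfe_{2i-1}$), reaching length $3m+4$.

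For the lower bound $FP(2m,3,2)\ge 3m+2$, I would use a weight-counting argument. For every basis vector $\bfe_i$, the three disjoint recovery sets must each contain at least one column with a $1$ in row $i$; summing over $i\in[2m]$ yields $\sum_{j=1}^{n}w(\bfg_j)\ge 6m$, where $w(\bfg_j)$ denotes the Hamming weight of column $j$. If every column had weight at most $2$ this already gives $n\ge 3m$. To extract the additional $+2$, I would exploit a parity observation: since $\bfe_i$ has odd weight while any $\F_2$-sum of even-weight columns has even weight, every recovery set for $\bfe_i$ must contain at least one odd-weight column, forcing additional structure on $G$. Combining this with the recovery constraint for $\mathbf{1}\in\F_2^{2m}$, whose three disjoint recovery sets each have size at least $m$ (since in each set every row must be hit an odd number of times), should pin down the extra two columns. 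The odd-dimensional lower bound $FP(2m+1,3,2)\ge 3m+3$ would then follow either by puncturing a coordinate to reduce to the even case or by running the analogous weight-counting argument directly for odd $k$.

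The main obstacle will be extracting the precise additive constant $+2$ in the lower bound for even $k$. The basic weight-counting delivers only $n\ge 3k/2$, and the refinement to $3k/2+2$ requires combining the parity constraints forced by single-coordinate recoveries with the size constraints forced by multi-coordinate recoveries such as $\mathbf{1}$. I expect a clean argument along these lines exists but requires careful bookkeeping across several distinguished vectors simultaneously; the analogous refinement for odd $k$ should then follow by similar means once the even case is established.
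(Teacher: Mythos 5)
The paper does not prove this lemma at all: Lemma~\ref{lem:FP_for_t=3} is stated as a restatement of \cite[Corollary~16]{YAAKOBI2020}, so the ``paper's own proof'' is simply a citation. Your attempt is therefore an independent reconstruction, and it is not yet a proof; it is a plan with a concrete error and two substantial unfilled gaps.

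The concrete error is in the odd-dimensional upper bound. Your construction for $k=2m+1$ appends only two new columns with a $1$ in coordinate $2m+1$: the unit vector $\bfe_{2m+1}$ and the parity column $\bfe_{2m+1}+\sum_{i=1}^m\bfe_{2i-1}$. Every recovery set for $\bfe_{2m+1}$ must contain an \emph{odd} number of columns with a $1$ in that coordinate, hence at least one of those two columns; since they must be pairwise disjoint, you can build at most two recovery sets for $\bfe_{2m+1}$, not three. To get $3$-functional PIR in dimension $2m+1$ you need at least three columns supported on the last coordinate, so the odd case requires modifying some of the existing columns (or adding differently structured ones), not just appending $\bfe_{2m+1}$ and one parity vector.

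The two gaps: (a) For the even case, the case analysis that the construction $\{\bfe_i\}_{i\le 2m}\cup\{\bfe_{2i-1}+\bfe_{2i}\}_{i\le m}\cup\{\sum\bfe_{\text{odd}},\sum\bfe_{\text{even}}\}$ serves every $\bfv\in\F_2^{2m}$ is genuinely delicate — for some $\bfv$ the ``obvious'' first two recovery sets leave no third — and you have not carried it out. (b) For the lower bound, the basic weight count gives only $n\ge 3m$, and the step you rely on to extract the extra $+2$, namely that each of the three disjoint recovery sets for $\mathbf 1$ has size at least $m$, is not justified: it implicitly assumes all columns have weight at most $2$, which need not hold. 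You flag this yourself, but as written the lower bound argument does not close. The key missing ingredient is a bound that simultaneously exploits the recoveries of the $\bfe_i$'s and of $\mathbf 1$ (or another distinguished vector) without assuming a column-weight cap.
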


Several works have also explored properties related to 
all-symbol PIR codes. For instance, a one-step majority-logic 
decodable code with $t$ orthogonal repair sets 
corresponds to a $(t+1)$-all-symbol PIR code (see 
Chapter 8 in~\cite{lin2001error}). That work presents several classes of cyclic majority-logic decodable codes.  Additionally, the 
$(t,s)$-disjoint-repair-group property (DGRP), 
introduced in \cite{KARINGULA_VARDY_WOOTTERS2022}, 
coincides with the $(t+1)$-all-symbol PIR requirement when $s=n$. Moreover, if we impose a restriction on the size of the recovery sets, namely that they are of size at most $r$, then such a constrained $t$-all-symbol PIR code corresponds to a code with locality $r$ and availability $t$~\cite{RAWAT2016}.

Using the bounds and optimal constructions 
for $2$-DGRP codes with $s=n$ provided in 
\cite[Theorem 1, Example 1]{KARINGULA_VARDY_WOOTTERS2022}, 
we obtain:

 \begin{lemma}\label{lem:ASP_For_t=3}
    For $t=3$, it holds that:
    \[
     ASP(k,3,q) = k + r,
    \]
    where $r$ is the smallest integer such that 
    $\binom{r}{2} \geq k$. 
 \end{lemma}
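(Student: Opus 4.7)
The plan is to prove $ASP(k,3,q) = k+r$ via matching upper and lower bounds. For the lower bound, I would use that every $3$-ASP code of dimension $k$ is also a $3$-PIR code: starting from a $3$-ASP generator matrix $G$, pick $k$ linearly independent columns and an invertible $M \in \F_q^{k \times k}$ that sends them to $\bfe_1,\dots,\bfe_k$. By Lemma~\ref{Invariance_Of_Mul_By_Invertible}, $MG$ is still $3$-ASP, and since $\bfe_1,\dots,\bfe_k$ are among its columns, the $3$-ASP property specialises to the $3$-PIR property at these columns. Combined with Lemma~\ref{lem:B=P=k+r_For_t=3}, this yields $ASP(k,3,q) \geq P(k,3,q) = k+r$.

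For the upper bound, I would exhibit an explicit $3$-ASP code of length $n = k+r$. Fix a graph $\Gamma$ on vertex set $[r]$ with exactly $k$ edges and no isolated vertices; such a $\Gamma$ exists because the minimality of $r$ forces $k \geq r-1$ (using $\binom{r-1}{2} < k$), so a connected spanning subgraph of $K_r$ with $k$ edges will do. Label the edges $1,\dots,k$, let $H \in \F_q^{k\times r}$ be a suitably weighted edge--vertex incidence matrix, and set $G := [I_k \mid H]$. For each information column $\bfe_i$ indexed by an edge $\{a,b\}$, the three pairwise disjoint recovery sets are $\{i\}$ together with the standard parity-based recovery sets built around $\bfg_{k+a}$ and $\bfg_{k+b}$; disjointness follows because no other information column is indexed by the same edge. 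For each parity column $\bfg_{k+j}$ indexed by a vertex $j$, the three recovery sets are $\{k+j\}$, the set of information columns indexed by edges incident to $j$, and the set $\{k+j' : j' \neq j\}$ of all remaining parity columns. In characteristic~$2$ the last set immediately gives $\bfg_{k+j}$, since each edge not through $j$ is incident to two other vertices and thus contributes twice to the sum and cancels.

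The main obstacle I anticipate is making this third recovery set work uniformly over all prime powers $q$: the characteristic-$2$ cancellation trick breaks for $q > 2$, so the entries of $H$ must be chosen as suitable nonzero scalars in $\F_q$ (in effect, $2$-colouring the vertices of $\Gamma$ minus $j$ and placing compatible weights), or a different linear combination of parity columns must be used. A cleaner route is to invoke the correspondence in Section~\ref{Sec:RelatedWork} between $(2,n)$-DGRP and $3$-all-symbol PIR codes and then apply the optimal constructions and matching bounds from \cite[Theorem~1 and Example~1]{KARINGULA_VARDY_WOOTTERS2022}, which furnish the upper bound $ASP(k,3,q) \le k+r$ over any $\F_q$.
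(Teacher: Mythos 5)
Your two-pronged strategy matches the paper quite closely. The paper proves this lemma by invoking the $(2,n)$-DGRP bounds and construction from \cite[Theorem~1, Example~1]{KARINGULA_VARDY_WOOTTERS2022} --- exactly your ``cleaner route'' --- and separately gives a self-contained construction in Proposition~\ref{prop:t=3} with $G = (I_k \mid A)$, where the rows of $A$ are $k$ distinct weight-$2$ vectors in $\{0,1\}^r$, which is precisely your edge--vertex incidence matrix. Your lower-bound argument (conjugate by an invertible $M$ so that $\bfe_1,\dots,\bfe_k$ appear among the columns, then observe that the $3$-ASP property specialises to $3$-PIR at those columns) is correct and is the content of Proposition~\ref{prop:first}(i), giving $ASP(k,3,q) \ge P(k,3,q) = k+r$.

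The obstacle you anticipate in odd characteristic is not actually there, and no reweighting of $H$ is needed. Take the third recovery set for a parity column $\bfg_{k+j}$ to be the \emph{complement} of the first two: the remaining parity columns \emph{together with} the information columns for edges not incident to $j$. Since every edge lies in exactly two parity columns,
\[
\sum_{j'=1}^{r} \bfg_{k+j'} \;=\; 2\sum_{e=1}^{k} \bfe_e,
\]
and rearranging, using $\bfg_{k+j}=\sum_{e\,:\,j\in e}\bfe_e$, gives
\[
\bfg_{k+j} \;=\; \sum_{j' \ne j} \bfg_{k+j'} \;-\; 2\sum_{e\,:\, j \notin e} \bfe_e ,
\]
valid over every $\F_q$. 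In characteristic $2$ the second sum vanishes and you recover your cancellation trick; otherwise the unused information columns simply enter with coefficient $-2$. Thus $R_1=\{k+j\}$, $R_2=\{e\in[k]:j\in e\}$, and $R_3=\{k+j':j'\in[r],\ j'\ne j\}\cup\{e\in[k]:j\notin e\}$ are three pairwise disjoint recovery sets partitioning $[n]$, and your direct construction goes through verbatim for all prime powers $q$.
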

Notably, the construction that achieves the optimal  length for $ASP(k,3,q)$ is the same as the one 
employed for PIR codes in Lemma~\ref{lem:B=P=k+r_For_t=3}. 
We provide a detailed description of this construction in Section~\ref{subsec:t3}, where we utilize 
it to derive further results.

\subsection{Our Contribution}

In this paper, we focus on two basic problems regarding codes satisfying the $t$-all-symbol PIR and the $t$-all-symbol batch property. 

\begin{problem} \label{prob:1}
    For fixed $t$ and $k$, what is the smallest $n$ such that there exists a code satisfying the $t$-all-symbol PIR, and the $t$-all-symbol batch property, respectively?
\end{problem}

Consider the scenario where our code is $2$-dimensional over $\mathbb{F}_2$, and we want to be able to serve any list of $2$ vectors formed from its columns. The most straightforward construction of such a code is the parity code with generator matrix
\begin{align*}
G=\begin{pmatrix}
1 & 0 & 1 \\
0 & 1 & 1
\end{pmatrix} \in \F_2^{2 \times 3},\end{align*}
for which it is clear that any list of size $2$ made from the columns of $G$ can be served. In fact, this is the shortest $2$-dimensional code satisfying the $2$-all-symbol batch property.

However, the situation becomes more complicated when considering codes of larger dimension or larger $t$. In the first part of the paper, we derive closed formulas for small values of the dimension and of $t$, and obtain some general bounds.

In the second part of the paper, we consider codes with fixed parameters (such as length, dimension, and minimum distance) and investigate how well a code with those parameters can perform with respect to being $t$-all-symbol PIR or $t$-all-symbol batch.  That is, we determine bounds on the value of $t$ for which these properties can hold.

\begin{problem} \label{prob:2}
    For a fixed code $\mC$, what is the largest $t$ for which this code has the $t$-all-symbol PIR, and the $t$-all-symbol batch property, respectively? 
\end{problem}

Finally, we consider two famous families of codes (MDS and simplex codes) and analyze how well they perform relative to the bounds previously derived. This analysis reveals a clear connection between codes with the $t$-all-symbol batch property and an open conjecture from 2020~\cite{YAAKOBI2020} concerning the simplex code.

\section{The Length of ASP and ASB Codes}\label{Sec:3_The_Length}

In this section, we focus on Problem~\ref{prob:1}. 

\subsection{Basic Properties}

We begin with some preliminary results and observations regarding all-symbol PIR and all-symbol batch codes. Because of Lemma~\ref{Invariance_Of_Mul_By_Invertible} the choice of the generator matrix does not matter, and so in the sequel, we mainly focus on systematic generator matrices.

The following are some straightforward results for $ASP(k,t,q)$ and $ASB(k,t,q)$, in relation with the other values introduced in Notation~\ref{not:mins}.

\begin{proposition}\label{prop:first}
    We have that
    
    \begin{itemize}
    \item[(i)] 
    \( P(k,t,q) \le ASP(k,t,q) \le FP(k,t,q) \),
    
    \( B(k,t,q) \le ASB(k,t,q) \le FB(k,t,q) \).

    \item[(ii)] 
    \( ASP(k,t,q) \le ASB(k,t,q) \).

    \item[(iii)] \text{Strict monotonicity in \(t\)}:  
    \begin{align*}
        &ASP(k,t-1,q) \le ASP(k,t,q) - 1, \\
        &ASB(k,t-1,q) \le ASB(k,t,q) - 1.
    \end{align*}

    \item[(iv)] \text{Subadditivity in \(k\)}:  
    \begin{align*}
        &ASP(k_1 + k_2, t, q) \le ASP(k_1, t, q) + ASP(k_2, t, q), \\
        &ASB(k_1 + k_2, t, q) \le ASB(k_1, t, q) + ASB(k_2, t, q).
    \end{align*}
\end{itemize}
\end{proposition}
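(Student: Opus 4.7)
The plan is to handle (i), (ii), and (iv) quickly from the definitions and a standard block-diagonal construction, and to put most of the effort into (iii).

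For part (i), the inclusions $ASP \le FP$ and $ASB \le FB$ are immediate because the FP (resp.\ FB) defining property specializes to lists of the form $\{\bfg_i^t\}$ (resp.\ arbitrary multisets of columns). For the reverse bounds $P \le ASP$ and $B \le ASB$ I would take a realizing code $\mC$ for the right-hand side together with a systematic generator matrix $G = [I_k \mid \ast]$, which keeps the all-symbol property by Lemma~\ref{Invariance_Of_Mul_By_Invertible}; since the first $k$ columns of $G$ are $\bfe_1, \dots, \bfe_k$, the all-symbol property applied to these columns is exactly the PIR (resp.\ batch) property. Part (ii) is just as short: an ASB code can serve every list $\{\bfg_i^t\}$, which is the ASP requirement. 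For part (iv) I would take $G_1, G_2$ realizing the right-hand sides and form the block-diagonal $G = \bigl(\begin{smallmatrix} G_1 & 0 \\ 0 & G_2 \end{smallmatrix}\bigr)$; recovery sets inherited from $G_i$ work for the corresponding columns of $G$ since the complementary block contributes zero, and for ASB I would partition any length-$t$ list of columns of $G$ according to block and invoke the $t_i$-ASB property of each $G_i$ on the pieces (the $t_i$-ASB property following from $t$-ASB by truncating recovery sets after padding the list to length $t$).

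For part (iii), I start with a realizing matrix $G$ of length $n$. For $t \ge 2$ we have $n > k$, since a dimension-$k$ code of length $k$ cannot have two disjoint recovery sets for any of its columns (each column is independent from the others); hence $G$ admits a column $\bfg_j$ lying in the span of the others, and puncturing at index $j$ yields a rank-$k$ matrix $G'$ of length $n-1$. For the ASP version, among the $t$ disjoint recovery sets of $\bfg_i^t$ in $G$ at most one contains $j$, and the remaining $\ge t-1$ already live in $[n] \setminus \{j\}$, so $G'$ is $(t-1)$-ASP. For the ASB version, given a list $L = \{\bfg_{i_1}, \dots, \bfg_{i_{t-1}}\}$ of columns of $G'$, I would extend it to $L^+ = L \cup \{\bfg_j\}$ in $G$ and obtain disjoint recovery sets $R_1, \dots, R_{t-1}, R_t$ with $R_t$ serving $\bfg_j$ and $R_l$ serving $\bfg_{i_l}$.

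The main obstacle is the case in which $j$ appears in some $R_l$ with $l \le t-1$. There, disjointness forces $j \notin R_t$, so $R_t$ witnesses that $\bfg_j \in \langle \bfg_s : s \in R_t\rangle$ with $R_t \subseteq [n] \setminus \{j\}$; substituting this expression into the identity expressing $\bfg_{i_l}$ from $R_l$, I would replace $R_l$ by $R_l' := (R_l \setminus \{j\}) \cup R_t$. This new set still recovers $\bfg_{i_l}$, is contained in $[n] \setminus \{j\}$, and is disjoint from every $R_r$ with $r \in [t-1] \setminus \{l\}$ because both $R_l$ and $R_t$ were. The complementary case, in which no $R_l$ with $l \le t-1$ contains $j$, is immediate: $R_1, \dots, R_{t-1}$ themselves are the desired recovery sets in $[n] \setminus \{j\}$. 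This closes part (iii) for ASB, and the ASP version is the same argument specialized to a single-element list.
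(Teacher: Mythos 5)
Your proof is correct and takes essentially the same route as the paper: (i) and (ii) from the definitions (using a systematic generator matrix, which is justified by Lemma~\ref{Invariance_Of_Mul_By_Invertible}), (iii) by deleting a redundant column, and (iv) via the block-diagonal construction. The one place where you add genuine value is part (iii): the paper merely states that deleting any column works and defers to a citation, whereas you correctly observe that for $t \ge 2$ a realizing matrix must have $n > k$, so a column in the span of the others exists and can safely be deleted without dropping the dimension, and you spell out the substitution $R_l' := (R_l \setminus \{j\}) \cup R_t$ that repairs the at-most-one affected recovery set.
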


\begin{proof} \text{ }
    \begin{itemize}
         \item[(i)+(ii)] 
    These inequalities follow directly from the definitions.
          \item[(iii)] Let $G\in \F_q^{k \times n}$ be a matrix that realizes $ASB(k,t,q)$. Deleting any column of $G$ yields a $(t-1)$-all-symbol batch code; see \cite[Theorem 2]{YAAKOBI2020}. The same argument applies to $ASP$.
          \item[(iv)] Let $A\in \F_q^{k_1 \times n}$ and $B\in \F_q^{k_2 \times n}$ be matrices that realize  $ASB(k_1,t,q)$ and $ASB(k_2,t,q)$, respectively. Then the block-diagonal matrix $ 
          \begin{bmatrix}
          A & 0\\
          0 & B
          \end{bmatrix}$
          satisfies the \(t\)-all-symbol batch property, establishing the subadditivity. The same argument applies to $ASP$.  \qedhere
    \end{itemize}
\end{proof}

While the subadditivity in $k$ is easy to see, it is less clear whether subadditivity in $t$ holds as well, as it was shown for $FP(k,t,q)$ and $FB(k,t,q)$; see \cite[Proposition 2.6]{KILIC2025}. Nevertheless, we have the following (partial) result.

\begin{lemma}
    For any $\lambda\in \mathbb{N}$ we have 
    $$
        ASP(k,\lambda t,q)\leq  \lambda ASP(k,t,q), \quad ASB(k,\lambda t,q)\leq  \lambda ASB(k,t,q).
    $$
\end{lemma}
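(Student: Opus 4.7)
The plan is to prove both inequalities by a single direct concatenation construction. Let $n = ASP(k,t,q)$ (respectively $n = ASB(k,t,q)$) and let $G \in \F_q^{k \times n}$ be a full-rank matrix realizing the corresponding optimum. I would form the block matrix
\[
G' = [\, G \mid G \mid \cdots \mid G\,] \in \F_q^{k \times \lambda n},
\]
consisting of $\lambda$ side-by-side copies of $G$. Since $G$ has rank $k$, so does $G'$, hence $G'$ generates a $k$-dimensional code of length $\lambda n$. The aim is then to verify that $G'$ satisfies the $(\lambda t)$-all-symbol PIR (resp.\ batch) property, which immediately yields the desired inequality.

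For the ASP case, I would index the columns of $G'$ by $[\lambda n]$ so that the $j$-th copy of $G$ occupies positions $(j-1)n+1, \ldots, jn$. Every column of $G'$ equals some column $\bfg_i$ of $G$. By hypothesis there exist $t$ pairwise disjoint recovery sets $R_1, \ldots, R_t \subseteq [n]$ for $\bfg_i$ in $G$. For each $j \in [\lambda]$ and each $s \in [t]$, define the shifted recovery set $R_{j,s}' = \{\ell + (j-1)n : \ell \in R_s\} \subseteq [\lambda n]$. These $\lambda t$ sets are pairwise disjoint (different $j$'s occupy different blocks, and within a block the $R_s$'s are disjoint), and each serves $\bfg_i$ in $G'$. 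This shows $G'$ is $(\lambda t)$-ASP.

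For the ASB case, consider any multiset $L = \{\bfv_1, \ldots, \bfv_{\lambda t}\}$ of columns of $G'$. Each $\bfv_m$ equals some column of $G$. I would partition $L$ arbitrarily into $\lambda$ sub-multisets $L_1, \ldots, L_\lambda$ of size $t$ each. Since $G$ is $t$-ASB, each $L_j$ can be served by $G$ using $t$ disjoint recovery sets $R_{j,1}, \ldots, R_{j,t} \subseteq [n]$. Shifting these into the $j$-th block as above yields $\lambda t$ pairwise disjoint recovery sets in $[\lambda n]$ that jointly serve $L$ in $G'$. Hence $G'$ is $(\lambda t)$-ASB.

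There is no serious obstacle here; the argument is essentially bookkeeping. The only points worth stating carefully are that $G'$ retains dimension $k$ (so it genuinely witnesses $ASP(k,\lambda t,q)$ and $ASB(k,\lambda t,q)$), and that in the ASB case the partition of $L$ into $\lambda$ groups of size $t$ is arbitrary, which is what lets us route each group into its own copy of $G$ and obtain disjointness across blocks for free.
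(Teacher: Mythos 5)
Your proof is correct and is exactly the construction the paper uses: horizontally concatenate $\lambda$ copies of an optimal matrix and route each group of $t$ requests into its own block. The paper leaves the bookkeeping implicit; your version spells it out, but the underlying argument is the same.
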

\begin{proof}
    By horizontally joining $\lambda$ copies of a matrix realizing $ASB(k,t,q)$, we obtain a matrix satisfying the $\lambda t$-all-symbol batch condition. Similarly for $ASP$.
\end{proof}

We conjecture that the subadditivity in $t$ holds in general; this remains an open direction for future work.
\begin{conjecture}
    There is subadditivity in $t$, i.e., for all $k,q$ and $t_1,t_2\geq1$, we have 
    \begin{align*}
        ASP(k,t_1+t_2,q)&\leq ASP(k,t_1,q)+ASP(k,t_2,q), \\
        ASB(k,t_1+t_2,q)&\leq ASB(k,t_1,q)+ASB(k,t_2,q).
    \end{align*}
\end{conjecture}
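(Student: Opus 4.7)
The natural plan is to mimic the direct horizontal concatenation argument that establishes subadditivity in $t$ for $FP$ and $FB$ (and, by essentially the same reasoning, for $P$ and $B$), and to locate precisely where the all-symbol setting breaks the argument. Let $G_1 \in \F_q^{k \times n_1}$ and $G_2 \in \F_q^{k \times n_2}$ realize $ASP(k,t_1,q)$ and $ASP(k,t_2,q)$ respectively, and consider $G := [G_1 \mid G_2] \in \F_q^{k \times (n_1+n_2)}$. For every column $\bfg$ of $G$ we must produce $t_1 + t_2$ pairwise disjoint recovery sets. Without loss of generality $\bfg$ is a column of $G_1$; the $t_1$-ASP property of $G_1$ then supplies $t_1$ pairwise disjoint recovery sets among the columns of $G_1$, and since the index sets of $G_1$ and $G_2$ inside $G$ are disjoint, the task reduces to producing $t_2$ further pairwise disjoint recovery sets for $\bfg$ using only columns of $G_2$.

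The main obstacle is that the $t_2$-ASP property of $G_2$ only guarantees $t_2$ disjoint recovery sets for vectors that appear as columns of $G_2$, whereas $\bfg$ is a column of $G_1$ and need not coincide with any column of $G_2$. For $FP$ and $FB$ this gap disappears because the functional property supplies disjoint recovery sets for every vector in $\F_q^k$, and for $P$ and $B$ it disappears because the only requested vectors are the unit vectors $\bfe_i$, which one can force to be columns of both $G_1$ and $G_2$ by choosing systematic generators. Neither escape route is available in the all-symbol setting.

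To try to close the gap, I would pursue three complementary strategies. First, exploit Lemma~\ref{Invariance_Of_Mul_By_Invertible}: replacing $G_2$ by $MG_2$ for an invertible $M \in \F_q^{k \times k}$ preserves the $t_2$-ASP property but changes its columns, so one can search for an optimal pair $(G_1, G_2)$ together with a matrix $M$ such that every column of $G_1$ appears (up to a nonzero scalar) among the columns of $MG_2$; this would allow the ASP property of $G_2$ to be applied directly. Second, attack first the weakest subcase $t_2 = 1$ by appending $I_k$ to a matrix realizing $ASP(k,t,q)$, which would establish $ASP(k,t+1,q) \le ASP(k,t,q) + k$; the difficulty here is that one must arrange for each $\bfe_j$ to have $t$ disjoint recovery sets in the columns of $G_1$, an $FP$-type condition at the unit vectors. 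Third, verify the conjecture directly on the structured families studied in the paper (MDS codes, the simplex code, and the small-parameter optimal constructions) and try to extract a common structural property that can be pushed to the general case.

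For the all-symbol batch version the obstacle is compounded by a combinatorial step: given a requested multiset of size $t_1+t_2$ drawn from the columns of $G$, one must partition it into submultisets of total sizes $t_1$ and $t_2$ that can be served by $G_1$ and $G_2$ respectively. The natural partition by block of origin generically produces an imbalance that forces some requests to be routed across blocks, and then the same ASP-versus-FP mismatch resurfaces. I expect the decisive step in any successful proof to be either a structural lemma showing that length-optimal ASP/ASB matrices automatically satisfy a functional recovery property at a prescribed set of ``foreign'' vectors, or a fundamentally non-concatenative construction that simultaneously controls recovery for all columns of the combined code.
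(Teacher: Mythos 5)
The statement is a \emph{conjecture}, not a theorem --- the paper explicitly labels it an open direction and offers no proof --- so there is no paper-side argument to compare against. Your write-up is accordingly not a proof either, and you are upfront about this; it is an analysis of the obstruction, which is largely correct. You correctly identify the core issue: in a block concatenation $[G_1 \mid G_2]$, the $t_2$-ASP property of $G_2$ supplies recovery sets only for vectors that occur as columns of $G_2$, not for ``foreign'' columns coming from $G_1$, whereas the functional property ($FP$/$FB$) and the restriction to unit vectors in systematic form ($P$/$B$) let the analogous arguments go through. Your diagnosis also explains why the one partial result the paper does prove, $ASP(k,\lambda t,q)\le\lambda\,ASP(k,t,q)$, works: concatenating $\lambda$ copies of the \emph{same} matrix avoids the foreign-vector problem because every column of the concatenation appears in every block.

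One correction: your second strategy (the subcase $t_2 = 1$) in fact succeeds, and more easily than you suggest; the ``difficulty'' you flag is illusory. By Lemma~\ref{Invariance_Of_Mul_By_Invertible} one may assume a matrix $G_1$ realizing $ASP(k,t,q)$ is systematic. Then in $G = [G_1 \mid I_k]$ every column of $G$ is already a column of $G_1$ (the columns of $I_k$ are unit vectors, which appear in the systematic part of $G_1$), so the $t$-ASP property of $G_1$ yields $t$ pairwise disjoint recovery sets in the first block for any column $\bfg$ of $G$; the indices inside the $I_k$ block corresponding to $\supp(\bfg)$ give a $(t+1)$-th disjoint recovery set, since $I_k$ spans $\F_q^k$. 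Hence $ASP(k,t+1,q) \le ASP(k,t,q) + k = ASP(k,t,q) + ASP(k,1,q)$, so no $FP$-type hypothesis is needed --- systematizing $G_1$ gives the required $P$-type recovery for free. The same argument with $G_1$ systematic and $t$-ASB gives the batch case: given a multiset of $t+1$ columns of $G$, route one element to the $I_k$ block and let $G_1$ serve the remaining $t$ (all of which are columns of $G_1$). This settles the conjecture when $t_2 = 1$; the case $t_2 \ge 2$ remains open, and your first strategy faces the concrete obstacle you hint at: an invertible $M$ embedding the columns of $G_1$ into the columns of $MG_2$ up to scalars need not exist, for instance because the two matrices can have different numbers of distinct columns.
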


Next, we determine $ASB(k,t,q)$ and $ASP(k,t,q)$ for some small values of $k$ and $t$.

\begin{proposition}\label{prop:k=1_t=1_t=2}
    We have that
    \begin{itemize}
        \item[(i)] $ASP(1,t,q) = ASB(1,t,q)=t$,
        \item [(ii)] $ASP(k,1,q) = ASB(k,1,q) =k$, and 
        \item [(iii)] $ASP(k,2,q) = ASB(k,2,q) = k + 1$.
    \end{itemize}
\end{proposition}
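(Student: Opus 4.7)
The three equalities can be proved by establishing matching upper and lower bounds for each case.

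For (i), the plan is to exploit that a code of dimension $1$ forces every generator matrix $G \in \F_q^{1 \times n}$ to consist of scalars, with all $n$ entries nonzero (otherwise $G$ would not have rank $1$, or one could drop a zero column). Any single nonzero scalar column already spans the full code, so each singleton $\{i\}$ is a recovery set for any vector in $\F_q$. The $t$-ASP (equivalently $t$-ASB) property thus amounts to exhibiting $t$ pairwise disjoint nonempty subsets of $[n]$, which forces $n \ge t$. For the matching upper bound I would take $G = (1,1,\dots,1) \in \F_q^{1 \times t}$ and use the singleton decomposition $R_i = \{i\}$.

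For (ii), for $t=1$ a single recovery set suffices for each column, and the trivial choice $R_i = \{i\}$ always works for the column $\bfg_i$. Hence any full-rank $G$ satisfies the $1$-ASP and $1$-ASB property, and the bound $n \ge k$ is forced by the rank-$k$ requirement. Taking $G = I_k$ realizes $n=k$.

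For (iii), the lower bound $ASP(k,2,q) \ge k+1$ (and hence $ASB(k,2,q) \ge k+1$ by Proposition~\ref{prop:first}(ii)) follows immediately from the strict monotonicity in $t$ of Proposition~\ref{prop:first}(iii) together with part (ii). For the matching upper bound I would use the parity-check construction $G = [\,I_k \mid \bf1\,] \in \F_q^{k \times (k+1)}$, whose columns are $\bfe_1,\dots,\bfe_k$ and $\bf1$. To verify the $2$-ASB property I would do a short case analysis on the requested multiset $\{\bfu,\bfv\}$: if $\bfu,\bfv$ are two distinct unit vectors $\bfe_i,\bfe_j$ or if one of them is $\bf1$ and the other is some $\bfe_i$, then singleton recovery sets suffice; if the multiset is $\{\bfe_i,\bfe_i\}$ I would use $R_1 = \{i\}$ and $R_2 = \{k+1\} \cup ([k]\setminus\{i\})$, exploiting the identity $\bfe_i = \bf1 - \sum_{j\ne i}\bfe_j$, which is valid over every $\F_q$; and if the multiset is $\{\bf1,\bf1\}$ I would take $R_1 = \{k+1\}$ and $R_2 = [k]$. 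Since this verification simultaneously handles every list of the form $\{\bfg_i,\bfg_j\}$, including the repeated cases needed for the ASP property, both bounds are realized.

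There is no real obstacle here; the only mild subtlety is to make sure the case $\{\bfe_i,\bfe_i\}$ in (iii) works in every characteristic, which is handled by writing $\bfe_i = \bf1 - \sum_{j \ne i}\bfe_j$ rather than relying on a characteristic-$2$ identity. All three items are then immediate.
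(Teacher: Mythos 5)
Your proof is correct, and for parts (i) and (ii) it is essentially identical to the paper's. The one place you take a different route is the lower bound in (iii): the paper argues directly that a systematic $k\times k$ generator (i.e.\ $G=I_k$) cannot serve the request $\{\bfe_i,\bfe_i\}$, so $n\ge k+1$, whereas you deduce the same bound by chaining the strict monotonicity of Proposition~\ref{prop:first}(iii) with part~(ii), namely $ASP(k,2,q)\ge ASP(k,1,q)+1=k+1$, and then lift to $ASB$ via Proposition~\ref{prop:first}(ii). Both are valid since Proposition~\ref{prop:first} precedes this statement, but the paper's argument is self-contained and more elementary, while yours leans on the column-deletion lemma cited from~\cite{YAAKOBI2020}. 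Your verification of the $[\,I_k\mid\mathbf{1}\,]$ construction is more detailed than the paper's (which simply asserts it works), and the use of $\bfe_i=\mathbf{1}-\sum_{j\ne i}\bfe_j$ correctly handles all characteristics, so nothing is missing.
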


\begin{proof}
    \text{ }
    \begin{itemize}

        \item[(i)] For $k=1$, serving $t$ requests requires $t$ disjoint recovery sets. Hence at least $t$ columns; the matrix $(1,\dots,1) \in \F_q^{1 \times t}$ satisfies this.
        
        \item[(ii)] Any matrix realizing $ASB(k,1,q)$ must have rank $k$, and therefore must contain at least $k$ columns.  
    Equality is achieved by the $k \times k$ identity matrix.
        
        \item[(iii)] By Lemma~\ref{Invariance_Of_Mul_By_Invertible} we can assume that the matrix $G\in \F_q^{k \times n}$ realizing $ASB(k,2,q)$ is systematic. If $n=k$, then it is not possible to serve a request of the form $\{\bfe_i,\bfe_i\}$, hence $n\geq k+1$. Equality is achieved by taking the identity matrix with a global parity column. \qedhere
    \end{itemize}
\end{proof}

The following is a general lower and upper bound on $ASP(k,t,q)$ and $ASB(k,t,q)$.

\begin{proposition} \label{prop:lbub}
    We have that
    \begin{align*}
        \max(t + k - 1,\left\lceil \frac{2(k+1)t}{k+2} \right\rceil) \le ASP(k,t,q) \le ASB(k,t,q) \le  \left\lceil \frac{(k+1)t}{2} \right\rceil.
    \end{align*}
\end{proposition}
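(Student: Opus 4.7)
The plan is to prove each of the three inequalities in turn. The middle inequality $ASP(k,t,q)\le ASB(k,t,q)$ is immediate from Proposition~\ref{prop:first}(ii), and the lower bound $t+k-1\le ASP(k,t,q)$ follows by combining Proposition~\ref{prop:first}(i) with equation~\eqref{eq:min_dis_atleast_t}, since any $t$-PIR code has minimum distance at least $t$ and the Singleton bound then gives $n\ge t+k-1$.

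The substantive lower bound $\lceil 2(k+1)t/(k+2)\rceil\le ASP(k,t,q)$ I would prove by a counting argument, assuming $k\ge 2$ (the case $k=1$ is handled by Proposition~\ref{prop:k=1_t=1_t=2}(i)). Fix a generator matrix $G$ of a $t$-ASP code of length $n$ and partition its columns into projective equivalence classes (two columns being equivalent iff one is a nonzero scalar multiple of the other), of sizes $c_1,\dots,c_m$ summing to $n$. For a column in class $a$, every recovery set is either a singleton drawn from class $a$ (of which there are at most $c_a$) or a set of size at least $2$; since the $t$ recovery sets are disjoint subsets of $[n]$, a direct count gives $c_a\ge 2t-n$ for every $a$. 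I would next establish the structural claim $m\ge k+1$: if $m=k$, the classes are linearly independent, so a column in class $a$ cannot be recovered from any set of columns drawn solely from the other classes; hence every recovery set for that column must contain at least one column of class $a$, forcing $c_a\ge t$ and therefore $n\ge kt$, which already dominates $2(k+1)t/(k+2)$ for $k\ge 2$. Summing the inequality $c_a\ge 2t-n$ over the $m\ge k+1$ classes then gives $n\ge (k+1)(2t-n)$, i.e., $n(k+2)\ge 2(k+1)t$.

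For the upper bound $ASB(k,t,q)\le\lceil(k+1)t/2\rceil$, I would exhibit an explicit construction based on the single parity-check code $[k+1,k]$, whose generator matrix is $[\,I_k\mid-\mathbf{1}\,]$. A short case analysis over the possible pairs of requested columns shows this code is $2$-ASB. For even $t$, taking the block-diagonal concatenation of $t/2$ disjoint copies gives a $t$-ASB code of length $(k+1)t/2$: any multiset of $t$ requests can be partitioned into pairs, one per copy, and the pairs are served independently. For odd $t$, the naive $\lceil t/2\rceil$ copies overshoot the target length by $\lceil(k+1)/2\rceil$, so I would instead combine $\lfloor t/2\rfloor$ full copies with an augmentation block of $\lceil(k+1)/2\rceil$ columns, and verify by case analysis that the overall length $\lfloor t/2\rfloor(k+1)+\lceil(k+1)/2\rceil=\lceil(k+1)t/2\rceil$ still suffices to serve every multiset of $t$ requests, allowing the single "extra" request to be absorbed by recovery sets that may span across the augmentation and the parity copies.

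The main obstacle will be the odd-$t$ case of the upper bound: the augmentation must be engineered so that the $t$-ASB property holds for \emph{every} multiset of $t$ requests drawn from the entire matrix, including multisets concentrated on a single vector not represented inside the augmentation. This requires a non-symmetric choice of shared columns and a careful case split depending on how the requested multiset distributes among the vectors $\bfe_1,\dots,\bfe_k,\mathbf{1}$. By contrast, the lower bound argument reduces cleanly once the two ingredients $c_a\ge 2t-n$ and $m\ge k+1$ are in hand.
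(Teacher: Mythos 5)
Your lower-bound argument is correct and, interestingly, a genuine improvement on the paper's. The paper groups columns into classes of \emph{exactly equal} columns, of sizes $n_1,\dots,n_m$, and asserts that the singleton recovery sets for $\bfg_j$ number at most $n_j$. But a singleton $\{i\}$ recovers $\bfg_j$ precisely when $\bfg_i$ is a nonzero \emph{scalar multiple} of $\bfg_j$, so over $\F_q$ with $q>2$ the paper's inequality $n_j\ge 2t-n$ can fail outright: for $\F_3$, $k=2$, the six columns $(1,0),(2,0),(0,1),(0,2),(1,1),(2,2)$ realize $ASP(2,4,3)=6$, yet $n_j=1<2=2t-n$ for every $j$. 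Your projective classes give the correct count $c_a\ge 2t-n$, and summing over the $m\ge k+1$ classes yields the bound exactly as in the paper. Your treatment of the $m=k$ case is also cleaner: observing that it forces $n\ge kt$, which dominates $2(k+1)t/(k+2)$ for $k\ge2$, avoids the paper's explicit remove-and-insert modification. (Note, incidentally, that for $k=1$ the claimed lower bound $\lceil 4t/3\rceil$ actually \emph{exceeds} $ASP(1,t,q)=t$, so Proposition~\ref{prop:k=1_t=1_t=2}(i) is a counterexample rather than a verification of this case; this defect is in the proposition itself and is not addressed by the paper either.)

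On the upper bound, two issues. First, a terminology slip: you write ``block-diagonal concatenation of $t/2$ disjoint copies'' of $[\,I_k\mid-\mathbf{1}\,]$, but a block-diagonal stack has dimension $kt/2$, not $k$. You clearly intend the \emph{horizontal} concatenation $[\,I_k\mid-\mathbf{1}\mid I_k\mid-\mathbf{1}\mid\cdots\,]$, which is what your pairing argument requires and which coincides (up to scalars) with the paper's construction for even~$t$. Second, and more substantively, your odd-$t$ case is left as a sketch: the target count $\lfloor t/2\rfloor(k+1)+\lceil(k+1)/2\rceil=\lceil(k+1)t/2\rceil$ is exactly right, but the augmentation block and the case analysis verifying the $t$-all-symbol batch property are not supplied, and this is the genuinely difficult part. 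You should be aware, though, that the paper's own construction ($\lceil t/2\rceil$ copies of each $\bfe_i$ plus $\lfloor t/2\rfloor$ copies of $\mathbf{1}$) uses $k\lceil t/2\rceil+\lfloor t/2\rfloor$ columns, which for odd $t$ and $k\ge3$ is strictly larger than $\lceil(k+1)t/2\rceil$ (e.g.\ $k=t=3$ gives $7$ versus $6$), so the paper's proof as written also only establishes the weaker bound $ASB(k,t,q)\le k\lceil t/2\rceil+\lfloor t/2\rfloor$. In short: your lower-bound proof is complete and more careful than the paper's, while your upper-bound argument is correct for even $t$ (after fixing the terminology) and has an acknowledged gap for odd $t$ that the paper does not fill either.
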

\begin{proof}

    For $t = 1$ and $t = 2$ the statement follows from Proposition \ref{prop:k=1_t=1_t=2}, so suppose $t \geq 3$. 
    
    First, by Equation~\eqref{eq:min_dis_atleast_t} and Proposition~\ref{prop:first}, we have that $t + k - 1\le P(k,t,q)\le ASP(k,t,q)$. We next show that $ASP(k,t,q) \geq {2(k+1)t}/{(k+2)}$. Let $\mathcal{C}\subseteq \F_q^n$ be a $t$-all-symbol PIR code of dimension $k$ and length $n = ASP(k,t,q)$, and let $G$ be any generator matrix of $\mathcal{C}$. Suppose that $G$ has $m \leq n$ distinct columns $\bfg_1, \dots, \bfg_{m}$ appearing $n_1, \dots, n_{m}$ many times, respectively. Note that then $n_1 + \cdots + n_m = ASP(k,t,q)$. 
    
    Now, fix some $j \in [m]$. If $t < n_j$ then $G$ cannot attain $ASP(k,t,q)$, since the extra $n_j-t$ columns can be removed and the respective generated code is still $t$-all-symbol PIR, so we have $t \geq n_j$. By assumption $G$ can serve $\{\bfg_j^t\}$, and there can be at most $n_j$ recovery sets of size one. Hence, we have
    $$
        ASP(k,t,q) \geq n_j + 2(t - n_j).
    $$

    Summing over all $j \in [m]$ we obtain
    $$
        m\cdot ASP(k,t,q) \geq ASP(k,t,q)  + 2mt - 2ASP(k,t,q),
    $$
    and rearranging, and using that $ASP(k,t,q)$ is an integer, yields
    $$
        ASP(k,t,q) \geq \left\lceil \frac{2mt}{m+1}\right\rceil.
    $$ 

    Since $G$ has rank $k$ we have $m \geq k$, but we claim that also $m \geq k+1$ holds. In fact, if $m = k$ then we may assume $\bfg_i = \bfe_i$ for $i \in [k]$, and then $n_1 = \dots= n_k = t$ and $n = kt$ is the only option. However, by removing $k$ columns of $G$, one $\bfe_i$ for each $i \in [k]$, and inserting one column equal to $\bfe_1 + \dots + \bfe_k$, we get a matrix that still satisfies the $t$-all-symbol PIR property. This is in contradiction with the assumption that $G$ realizes $ASP(k,t,q)$, so we conclude $m \geq k+1$. 

    Hence, 
    $$
        ASP(k,t,q) \geq \left\lceil \frac{2mt}{m+1}\right\rceil \geq \left\lceil \frac{2(k+1)t}{k+2}\right\rceil
    $$
    as claimed.

    To finish the proof, we show that $ASB(k,t,q) \leq \left\lceil {(k+1)t}/{2} \right\rceil$ by explicitly constructing a generator matrix for a $t$-all-symbol batch code over $\F_q$ of dimension $k$ and length $ \left\lceil {(k+1)t}/{2} \right\rceil$. Let $G$ be the matrix that contains $\lceil\frac{t}{2}\rceil$ copies of $\bfe_i$ for all $i \in [k]$, and $\lfloor\frac{t}{2}\rfloor$ copies of $\bf1$. To verify that $G$ satisfies the $t$-all-symbol batch condition, consider a request (multiset) $L = \{\mathbf \bfe_1^{\,t_1},\, \dots,\, \mathbf \bfe_k^{\,t_k},\, \mathbf 1^{\,t_{k+1}}\},$ with $t_1 + \dots + t_{k} + t_{k+1} = t$. If $t_i\le \lceil\frac{t}{2}\rceil$ for every $i \in [k]$, then the required recovery sets are immediate. Moreover, observe that at least  $k$ of the integers $t_i$, $i \in [k+1]$ are less than or equal to $\lfloor \frac{t}{2}\rfloor$. Thus, we are left with two cases:
    
     \textbf{Case 1:} Suppose that $t_{k+1} > \lfloor \frac{t}{2}\rfloor$ and $t_i \le \lfloor\frac{t} {2}\rfloor$ for all $i \in [k]$. We clearly have enough size-one recovery sets for each $\bfe_i$, $i \in [k]$. Moreover, since $t_1+\dots +t_{k+1}=t=\lceil\frac{t}{2}\rceil+\lfloor\frac{t} {2}\rfloor$ we obtain
    \begin{align*}
        t_{k+1}-\left\lfloor\frac{t}{2}\right\rfloor = \left\lceil\frac{t}{2}\right\rceil-(t_1 + \dots + t_{k}) \le \left\lceil\frac{t}{2}\right\rceil-t_i
    \end{align*}
    for all $i \in [k]$. Thus, for every such \(i\), there remain enough unused columns of type \(\bfe_i\) to construct additional recovery sets for $\bf1$. In total, we may recover $\bf1$ using \(\lfloor t/2 \rfloor\) size-one recovery sets and \(t_{k+1} - \lfloor t/2 \rfloor\) recovery sets formed from the remaining columns.
    
    \textbf{Case 2:} Suppose that $t_1 > \lfloor \frac{t}{2}\rfloor$ and $t_i \le \lfloor\frac{t} {2}\rfloor$ for all $i \in \{2,\dots,k+1\}$. This time, we have enough recovery sets of size one for the last $k$ columns. Moreover, since $t_1+\dots +t_{k+1}=t=\lceil\frac{t}{2}\rceil+\lfloor\frac{t} {2}\rfloor$ we obtain
    \begin{align*}
        t_{1}-\left\lceil\frac{t}{2}\right\rceil = \left\lfloor\frac{t}{2}\right\rfloor-(t_2 + \dots + t_{k+1}) \le \left\lfloor\frac{t}{2}\right\rfloor-t_i
    \end{align*}
    for all $i \in \{2,\dots,k+1\}$. Thus, we may recover $\bfe_1$ using \(\lfloor t/2 \rfloor\) size-one recovery sets and \(t_{k+1} - \lfloor t/2 \rfloor\) recovery sets formed from the remaining columns. Note that we can prove the above statement in an analogous way in the case where $t_i > \lceil\frac{t}{2}\rceil$ for any $i \in \{2,\dots,k\}$. We conclude that $G$ satisfies the $t$-all-symbol batch condition, and this finishes the proof.
\end{proof}

From the proof of Proposition~\ref{prop:lbub} we observe the following refinement of the lower bound.

\begin{corollary}
    Let $\mathcal{C} \subseteq \F_q^n$ be a code of dimension $k$ that satisfies the $t$-all-symbol PIR property. If the generator matrix $G$ of $\mC$ has $\gamma$ distinct columns, then
    $$
        n\geq \left\lceil \frac{2\gamma t}{\gamma+1}\right\rceil.
    $$ In particular, if the generator of $\mathcal{C}$ has all distinct columns, or equivalently $d^\perp \ge 3$, then $2t-1 \le n$.
\end{corollary}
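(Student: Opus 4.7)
The plan is to directly adapt the counting argument from the proof of Proposition~\ref{prop:lbub}, but without the reductions that were specific to $G$ realizing $ASP(k,t,q)$. I would first partition the columns of $G$ by the one-dimensional subspaces they span: take representatives $\bfg_1,\dots,\bfg_\gamma$ of these $\gamma$ classes and let $n_j$ denote the number of columns lying in $\langle\bfg_j\rangle$, so that $n_1+\dots+n_\gamma = n$. This is the grouping that makes ``$\gamma$ distinct columns'' equivalent to $d^\perp\geq 3$ in the boundary case $\gamma = n$, which the second assertion of the corollary requires.

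Next, fixing $j\in[\gamma]$ and analyzing how $G$ serves the list $\{\bfg_j^t\}$: a size-one recovery set for $\bfg_j$ must use a column proportional to $\bfg_j$, so at most $n_j$ of the $t$ disjoint recovery sets have size one, and the remaining $t-n_j$ each use at least two columns. This gives $n \geq n_j + 2(t-n_j) = 2t - n_j$, an inequality that also holds trivially (via $n \geq n_j$) when $n_j \geq t$, so we obtain it unconditionally. Summing over $j \in [\gamma]$ produces $\gamma n \geq 2\gamma t - n$, i.e.\ $(\gamma+1)n \geq 2\gamma t$, and taking ceilings (since $n$ is an integer) yields $n \geq \lceil 2\gamma t/(\gamma+1)\rceil$, which is the first claim.

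For the ``in particular'' part, $d^\perp\geq 3$ is equivalent to the columns of $G$ being nonzero and pairwise linearly independent, which in the grouping above is precisely $\gamma = n$; substituting yields $n(n+1)\geq 2nt$, i.e.\ $n\geq 2t-1$. There is no real obstacle in the proof: it is just a streamlined version of the calculation already present in Proposition~\ref{prop:lbub}, with the only mild subtlety being the case $n_j\geq t$ in the per-class inequality, which is dispatched immediately by the trivial bound $n \geq n_j$.
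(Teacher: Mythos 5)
Your proof is correct and uses the same counting argument that the paper refers to from Proposition~\ref{prop:lbub}; the corollary there is stated without its own proof. You make two clarifications that are genuinely needed and not written out in the paper. First, you correctly read ``$\gamma$ distinct columns'' as $\gamma$ columns spanning pairwise distinct one-dimensional subspaces (i.e.\ pairwise non-proportional), not merely distinct as vectors. Over $q>2$ these notions differ, and only the projective reading makes ``at most $n_j$ size-one recovery sets'' and $\sum_j n_j = n$ correct, and makes the claimed equivalence $\gamma = n \iff d^\perp \geq 3$ true. Read literally the corollary actually fails for $q>2$: over $\F_3$ the $2\times 6$ matrix with columns $\bfe_1, 2\bfe_1, \bfe_2, 2\bfe_2, \bfe_1+\bfe_2, 2\bfe_1+2\bfe_2$ has six distinct columns and satisfies the $4$-all-symbol PIR property, yet $n=6 < \lceil 48/7\rceil = 7$, whereas your projective count $\gamma=3$ gives $n \geq \lceil 48/8\rceil = 6$, which holds. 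Second, since $G$ need not realize $ASP(k,t,q)$, you cannot delete excess columns to force $n_j \leq t$ as in the proposition's proof, and your observation that $n \geq 2t - n_j$ holds trivially from $n \geq n_j$ when $n_j \geq t$ is exactly the needed fix.
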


Observe that when $k$ is large compared to $t$, the bound 
$t + k - 1$ is tighter than $\lceil {2(k+1)t}/{(k+2)}\rceil$. 
Conversely, for small values of $k$, the latter bound performs better. In particular, by setting $k=2$ in 
Proposition~\ref{prop:lbub}, the lower and upper bounds 
coincide, yielding the following corollary.

\begin{corollary}\label{prop:k=2}
    It holds that $ASP(2,t,q) = ASB(2,t,q) = t+\left\lceil\frac{t}{2}\right\rceil.$
\end{corollary}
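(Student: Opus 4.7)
The plan is to derive the corollary as a direct substitution of $k=2$ into Proposition~\ref{prop:lbub}, since in this regime the lower and upper bounds coincide.

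First, I would set $k=2$ in the upper bound from Proposition~\ref{prop:lbub} to get
\[
ASB(2,t,q) \le \left\lceil \frac{(k+1)t}{2}\right\rceil \bigg|_{k=2} = \left\lceil \frac{3t}{2}\right\rceil = t + \left\lceil \frac{t}{2}\right\rceil.
\]
Next, I would substitute $k=2$ into the lower bound on $ASP$ from the same proposition, observing that the second term dominates. Specifically,
\[
\left\lceil \frac{2(k+1)t}{k+2}\right\rceil \bigg|_{k=2} = \left\lceil \frac{6t}{4}\right\rceil = \left\lceil \frac{3t}{2}\right\rceil = t + \left\lceil \frac{t}{2}\right\rceil,
\]
which is at least $t+k-1=t+1$ for all $t\ge 2$ (the case $t=1$ is already covered by Proposition~\ref{prop:k=1_t=1_t=2}). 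Hence
\[
t + \left\lceil \frac{t}{2}\right\rceil \le ASP(2,t,q).
\]

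Finally, I would chain the inequalities using part~(ii) of Proposition~\ref{prop:first}:
\[
t + \left\lceil \frac{t}{2}\right\rceil \le ASP(2,t,q) \le ASB(2,t,q) \le t + \left\lceil \frac{t}{2}\right\rceil,
\]
forcing equality throughout. There is essentially no obstacle here: the entire content of the corollary is contained in the fact that, at $k=2$, the general lower bound $\lceil 2(k+1)t/(k+2)\rceil$ and the general upper bound $\lceil (k+1)t/2\rceil$ happen to coincide, and the only verification needed is the simple arithmetic identity $\lceil 3t/2\rceil = t + \lceil t/2\rceil$.
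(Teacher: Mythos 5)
Your proof is correct and follows exactly the route the paper indicates: the paper itself states that the corollary is obtained "by setting $k=2$ in Proposition~\ref{prop:lbub}," which is precisely what you do, including the observation that $\lceil 3t/2\rceil = t + \lceil t/2\rceil$ and that this dominates $t+k-1$. The only minor wrinkle — that for $t=1$ the two terms of the max tie at $2$, so no special-casing is actually needed — does not affect correctness.
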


Note that for $q = 2$, the result in 
Corollary~\ref{prop:k=2} follows directly from  Lemma~\ref{lem:k=2,q=2} and 
Proposition~\ref{prop:first}.

In the rest of this subsection we investigate structural properties of matrices realizing $ASP(k,t,q)$ and $ASB(k,t,q)$, respectively. The first question we answer is whether such a matrix can have repeated columns. The answer turns out to be yes in general; we will see in Section~\ref{subsec:t3} that $ASP(4,3,2)=ASB(4,3,2) = 8$, and the matrix
    $$
        G :=
        \begin{pmatrix}
            1 & 0 & 0 & 0 & 0 & 1 & 1 & 1 \\
            0 & 1 & 0 & 0 & 1 & 1 & 1 & 1 \\
            0 & 0 & 1 & 0 & 0 & 0 & 1 & 1 \\
            0 & 0 & 0 & 1 & 1 & 0 & 1 & 1 \\
        \end{pmatrix} \in \F_2^{4 \times 8}
    $$
satisfies the 3-all-symbol batch property. However, in some special cases, a matrix realizing $ASB(k,t,q)$ or $ASP(k,t,q)$, respectively, must have pairwise distinct columns, as we prove in Lemma~\ref{lemma:jump_repeated}. We further note that Lemma~\ref{lemma:jump_repeated} applies equally to functional PIR and functional batch codes. To the best of our knowledge, this observation has not previously appeared in the literature.

\begin{lemma}\label{lemma:jump_repeated}
    If $ASB(k+1,t,q)=ASB(k,t,q)+1$, then any matrix $G\in \F_q^{(k+1) \times n}$ that realizes $ASB(k+1,t,q)$, has pairwise distinct columns. The same holds true if we substitute $ASB$ with $ASP$, $FP$ or $FB$.
\end{lemma}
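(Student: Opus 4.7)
The plan is to argue by contradiction: assume that some matrix $G \in \F_q^{(k+1) \times n}$ realizing $ASB(k+1,t,q)$ has a repeated column, and construct from it a $k$-dimensional $t$-all-symbol batch code of length $n-2$. This forces $n - 2 \geq ASB(k,t,q) = n - 1$, which is absurd.

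By Lemma~\ref{Invariance_Of_Mul_By_Invertible}, I may first multiply $G$ by an invertible matrix so that $\bfg_1 = \bfg_2 = \bfe_{k+1}$, and then define $G' \in \F_q^{k \times (n-2)}$ by deleting columns $1$ and $2$ together with the last row. Writing $\pi : \F_q^{k+1} \to \F_q^k$ for the projection that deletes the last coordinate, the columns of $G'$ are exactly $\pi(\bfg_3), \dots, \pi(\bfg_n)$. For the rank, let $V = \langle \bfg_3, \dots, \bfg_n \rangle$; since $G$ has rank $k+1$ we have $V + \langle \bfe_{k+1}\rangle = \F_q^{k+1}$, and combined with $\ker \pi = \langle \bfe_{k+1}\rangle$ this gives $\pi(V) = \F_q^k$, so $G'$ has rank $k$.

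The main step, and the one I expect to require the most care, is to show that $G'$ inherits the $t$-all-symbol batch property. Given a multiset $\{\pi(\bfg_{j_1}), \dots, \pi(\bfg_{j_t})\}$ of columns of $G'$ (with $j_i \in \{3, \dots, n\}$), I use the $t$-ASB property of $G$ on the lifted multiset $\{\bfg_{j_1}, \dots, \bfg_{j_t}\}$ to obtain pairwise disjoint recovery sets $R_1, \dots, R_t \subseteq [n]$, and set $R'_i := R_i \setminus \{1,2\}$. These sets remain pairwise disjoint, and since $\bfg_1, \bfg_2 \in \ker \pi$, any relation $\bfg_{j_i} = \sum_{\ell \in R_i} \alpha_\ell \bfg_\ell$ projects to a valid relation $\pi(\bfg_{j_i}) = \sum_{\ell \in R'_i} \alpha_\ell \pi(\bfg_\ell)$. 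The delicate point is the edge case $R_i \subseteq \{1,2\}$, which makes $R'_i$ empty; but then $\bfg_{j_i} \in \langle \bfe_{k+1}\rangle$ and so $\pi(\bfg_{j_i}) = 0$, which is correctly recovered by the empty set.

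The ASP, FP, and FB cases follow the same blueprint: for ASP the request is $\{\bfg_j^t\}$ and the argument is identical; for FP and FB, an arbitrary target vector in $\F_q^k$ is lifted to $\F_q^{k+1}$ by appending a zero in the last coordinate before invoking the corresponding property of $G$, and the projected recovery sets work in $G'$ by the same reasoning.
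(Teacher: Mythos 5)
Your proof is correct and takes the same approach as the paper: apply Lemma~\ref{Invariance_Of_Mul_By_Invertible} to move the repeated column to a unit vector, delete both copies together with the corresponding row, and observe that the resulting $k\times(n-2)$ matrix inherits the relevant recovery property, giving the contradiction. You merely spell out the details --- the rank of the reduced matrix, the projection of recovery sets, and the empty-recovery-set edge case --- that the paper leaves implicit in its one-line assertion that the residual block $B$ satisfies the $t$-all-symbol batch property.
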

\begin{proof}
    Let $G$ be a matrix that realizes $ASB(k+1,t,q)$. Suppose, towards a contradiction, that~$G$ has a column $\bfg$ appearing at least twice, and without loss of generality assume that it appears as the first two columns of $G$. By Lemma~\ref{Invariance_Of_Mul_By_Invertible}, we may assume that $G$ has the form
    \[
        G=
        \begin{pmatrix}
        1 & 1 & *\\
        \bf0 & \bf0 & B
        \end{pmatrix},
        \qquad
        B\in\mathbb{F}_q^{k\times (n-2)}.
    \]
    But then $B$ satisfies the $t$-all-symbol batch property, contradicting the assumption that $ASB(k+1,t,q)=ASB(k,t,q)+1$. The same argument applies to $ASP$, $FP$ and $FB$.
\end{proof}

\subsection{The Case $t=3$} \label{subsec:t3}

It is known that
$$
    P(k,3,q) = B(k,3,q) = k + r,
$$
where $r$ is the smallest integer such that $\binom{r}{2} \geq k$ (Lemma \ref{lem:B=P=k+r_For_t=3}). Using similar ideas, and a similar construction, it can be shown that this is also the case for 3-all-symbol batch and 3-all-symbol PIR codes.

\begin{lemma}\label{lemma:t=3_pir_iff_batch}
    A code $\mathcal{C}\subseteq \F_q^n$ satisfies the $3$-all-symbol batch property if and only if it satisfies the $3$-all-symbol PIR property. In particular, $ASB(k,3,q) = ASP(k,3,q)$.
\end{lemma}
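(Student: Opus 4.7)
The plan is to split the claim into two implications. The direction batch $\Rightarrow$ PIR is immediate from the definitions, since serving the multiset $\{\bfg_i^3\}$ is a special case of the $3$-all-symbol batch property. So the real content is to show that any generator matrix $G \in \F_q^{k \times n}$ satisfying the $3$-all-symbol PIR property automatically satisfies the $3$-all-symbol batch property.

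To that end, I would fix such a $G$ and consider an arbitrary request, i.e.\ a tuple $(t_1,\dots,t_n) \in \N^n$ with $\sum_i t_i = 3$. Up to relabelling the nonzero entries, only three patterns can occur: (a) a single $t_i = 3$; (b) a pair $t_i = 2$, $t_j = 1$ with $i \ne j$; (c) three distinct positions $i,j,k$ with $t_i = t_j = t_k = 1$. Case (a) is exactly the $3$-ASP hypothesis applied to $\bfg_i$, and Case (c) is trivial since $\{i\}, \{j\}, \{k\}$ are pairwise disjoint and each singleton $\{\ell\}$ recovers $\bfg_\ell$.

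The only nontrivial case is (b), and this is where I expect the (small) amount of work. I would use $\{i\}$ as a recovery set for one copy of $\bfg_i$ and $\{j\}$ as a recovery set for $\bfg_j$; it then remains to produce a recovery set for the second copy of $\bfg_i$ that avoids both $i$ and $j$. Here the $3$-ASP hypothesis gives three pairwise disjoint recovery sets $A_1, A_2, A_3$ for $\bfg_i$, and pairwise disjointness forces at most one of them to contain $i$ and at most one to contain $j$. Hence at least one $A_\ell$ lies entirely in $[n]\setminus\{i,j\}$, and this $A_\ell$ together with $\{i\}$ and $\{j\}$ provides the required three disjoint recovery sets. Since the equivalence of the two properties holds for every code, $ASB(k,3,q)$ and $ASP(k,3,q)$ are minima over the same family and therefore coincide.
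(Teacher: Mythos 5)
Your proof is correct and follows essentially the same route as the paper's: reduce to the multiset $\{\bfg_i,\bfg_i,\bfg_j\}$ and apply a pigeonhole count on the three pairwise disjoint recovery sets guaranteed by the $3$-ASP property. The only cosmetic difference is in how the count is used: the paper observes that at most one of the three sets meets $\{j\}$, so two of them remain to cover the two copies of $\bfg_i$ while $\{j\}$ serves $\bfg_j$; you instead take $\{i\}$ and $\{j\}$ as singleton recovery sets and argue that at least one of the three disjoint sets avoids both $i$ and $j$, using it for the second copy of $\bfg_i$. Both are correct and equally short.
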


\begin{proof}
    Let $G$ be a generator matrix of a $3$-all-symbol PIR code. To show that $G$ also satisfies the $3$-all-symbol batch property, it suffices to verify that any request containing two copies of a column and one copy of another can be served. Without loss of generality, consider the multiset \(\{\bfg_1, \bfg_1, \bfg_2\}\). By the PIR property, the column \(\bfg_1\) has three pairwise disjoint recovery sets. At most one of these sets can contain \(\bfg_2\). Hence, at least two of the recovery sets of \(\bfg_1\) avoid using \(\bfg_2\), and these can be used to recover the two copies of \(\bfg_1\).
The remaining singleton set \(\{\bfg_2\}\) serves as a recovery set for \(\bfg_2\).
\end{proof}

By combining Lemma~\ref{lem:B=P=k+r_For_t=3} and Lemma~\ref{lem:ASP_For_t=3}, we obtain the exact values of $ASP(k,3,q)$ and $ASB(k,3,q)$, as presented in the following proposition. For convenience, and since we will reference to this construction later in this work, we provide the proof below.
\begin{proposition}\label{prop:t=3}
    We have
    $$
        ASP(k,3,q) = ASB(k,3,q) = k + r,
    $$
    where $r$ is the smallest integer such that $\binom{r}{2} \geq k$. Equivalently, it holds that
    $$
        ASP(k,3,q) = ASB(k,3,q) = k + \left\lceil \frac{1 + \sqrt{1 + 8k}}{2}\right\rceil.
    $$
\end{proposition}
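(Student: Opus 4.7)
The plan is to reduce to proving only $ASP(k,3,q) = k+r$, since Lemma~\ref{lemma:t=3_pir_iff_batch} already gives $ASB(k,3,q) = ASP(k,3,q)$. The lower bound then follows immediately: by Proposition~\ref{prop:first}(i), $ASP(k,3,q) \geq P(k,3,q)$, and by Lemma~\ref{lem:B=P=k+r_For_t=3}, $P(k,3,q) = k+r$. So the entire content of the proof lies in matching this with an upper bound via an explicit construction.

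For the upper bound, I would construct a $k$-dimensional $3$-all-symbol PIR code of length $k+r$ directly. Since $\binom{r}{2} \geq k$, I would fix an injection $\sigma : [k] \to \binom{[r]}{2}$ and write $\sigma(i) = \{a_i, b_i\}$. Then take $G \in \F_q^{k \times (k+r)}$ to be the matrix whose first $k$ columns form $I_k$ and whose last $r$ columns are $\bfp_1, \dots, \bfp_r$ with $\bfp_j := \sum_{i : j \in \sigma(i)} \bfe_i$. This is precisely the classical pair-incidence construction used for PIR codes, and the point of the direct proof is to exhibit three disjoint recovery sets not only for each $\bfe_i$ but also for each $\bfp_j$.

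Verification would split into two cases. For each systematic column $\bfe_i$ I would exhibit the three pairwise disjoint recovery sets $\{i\}$, $\{k + a_i\} \cup \{i' \neq i : a_i \in \sigma(i')\}$, and $\{k + b_i\} \cup \{i' \neq i : b_i \in \sigma(i')\}$; disjointness follows from the injectivity of $\sigma$, since no $i' \neq i$ can have both $a_i$ and $b_i$ in $\sigma(i')$. For each parity column $\bfp_j$, the obvious two recovery sets are $\{k + j\}$ and $\{i : j \in \sigma(i)\}$. For the third, I would rely on the identity $\bfp_j = \sum_{j' \neq j} \bfp_{j'} - 2 \sum_{i : j \notin \sigma(i)} \bfe_i$, which holds over any $\F_q$ and is a direct consequence of $\sum_{j'} \bfp_{j'} = 2 \sum_i \bfe_i$ (each $\bfe_i$ appears in exactly two of the $\bfp_{j'}$). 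This yields the third recovery set $\{k + j' : j' \neq j\} \cup \{i : j \notin \sigma(i)\}$, which is disjoint from the first two by construction.

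The main obstacle is producing a valid third recovery set for the parity columns uniformly over all $\F_q$. In characteristic $2$ the identity collapses to $\bfp_j = \sum_{j' \neq j} \bfp_{j'}$ and no systematic columns are needed, but for general $q$ the correction term with coefficient $-2$ is essential to cancel the double-counting in $\sum_{j'} \bfp_{j'}$; getting this combinatorics right is the only nontrivial step. To conclude, the alternative closed form $r = \lceil (1 + \sqrt{1+8k})/2 \rceil$ follows by solving the quadratic inequality $r(r-1)/2 \geq k$ for the smallest positive integer $r$, which is routine.
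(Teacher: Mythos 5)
Your proposal is correct and follows essentially the same route as the paper: the lower bound reduces to $P(k,3,q)=k+r$ via Proposition~\ref{prop:first}, the upper bound uses the same pair-incidence matrix $G=(I_k\mid A)$ with rows of $A$ being distinct weight-$2$ indicator vectors, and the batch case follows from Lemma~\ref{lemma:t=3_pir_iff_batch}. The only difference is that you actually exhibit the three disjoint recovery sets for each parity column $\bfp_j$ (via the identity $\sum_{j'}\bfp_{j'}=2\sum_i\bfe_i$), whereas the paper waves this off as ``not too hard to show''; your verification is correct and fills in that gap.
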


\begin{proof}
    Let $k$ and $q$ be given, and let $r$ be as above. It follows from $P(k,3,q) = k+r$ that $ASP(k,3,q)$ and $ASB(k,3,q)$ cannot be smaller than $k+r$. Hence, we are done if we can construct a $k\times (k+r)$ matrix over $\F_q$, which satisfies the $3$-all-symbol batch property. To do so, note that we can choose $k$ distinct vectors of weight 2 in $\F_2^{r}$. These vectors can be considered as elements of $\F_q^{r}$ instead, and we let $A\in \F_q^{k\times r}$ denote the matrix which has them as its rows. Now it is not to hard to show that $G = (I_k \mid A)$ satisfies the $3$-all-symbol PIR property, and hence, by Lemma \ref{lemma:t=3_pir_iff_batch}, also the $3$-all-symbol batch property. 
\end{proof}

\begin{remark}
    To the best of our knowledge, the exact value of 
    $FB(k,3,q)$ remains unknown; for asymptotic bounds, 
    we refer the reader to \cite[Table~IV]{FAZELI2015} 
    and \cite{KILIC2025}. This stands in contrast to 
    functional PIR codes, as characterized in 
    Lemma~\ref{lem:FP_for_t=3}. By comparing 
    Proposition~\ref{prop:t=3} with 
    Lemma~\ref{lem:FP_for_t=3}, we observe that 
    constructing a functional PIR code requires 
    approximately $O(k-\sqrt{k})$ additional columns. 
    More precisely, relative to the all-symbol 
    quantities in Proposition~\ref{prop:t=3}, a 
    functional PIR code requires approximately 
    $\frac{k}{2} - \sqrt{2k}$ extra columns, 
    within a tolerance of $\pm 4$ columns.
\end{remark}

\subsection{The Case $t=4$}

Recall that $B(k,4,2) = P(k,4,2) = P(k,3,2) + 1$, as 
stated in Lemma~\ref{lem:B=P=k+r_For_t=4}. By 
extending the techniques used there to the 
all-symbol case, together with the results from Section~\ref{subsec:t3}, we provide bounds for $ASP(k,4,q)$ and $ASB(k,4,q)$. Notably, our proof generalizes the previously 
mentioned binary results to arbitrary $q$ (see 
Corollary~\ref{COR:BatchFor_k=4_general_q}) and 
yields exact values for even $q$ and certain values of $k$ (see Proposition \ref{prop:t=4_q=2}).

We next define the matrices that will be used to derive the results of this section. Let $G = (I_k \mid A)$ be the matrix constructed in the proof of Proposition \ref{prop:t=3}, except that all entries equal to $1$ in $A$ are replaced by $-1$. Note that $G \in \F_q^{k \times n}$ where $$n = k + r = k + \left\lceil \frac{1 + \sqrt{1 + 8k}}{2}\right\rceil,$$ and that $G$ still has the $3$-all-symbol batch property. 

Now let \(G'\in\F_q^{k\times(n+1)}\) be obtained from \(G\) by adding a single parity-check column, namely, the unique column for which the sum of all columns of \(G'\) is the zero vector. Similarly, let \(G''\in\F_q^{k\times(n+2)}\) be obtained from \(G'\) by adding one more copy of this parity-check column. By the construction of \(A\), this parity-check column is precisely the vector $\bf1$ (see Example~\ref{Ex:t=4}, where \(G'\) is displayed for \(k=5\) and \(G''\) for \(k=6\)).

We next prove a couple of lemmas on the way to the main result of this section.
\begin{lemma}\label{Lem:G'IsBatch}
    The matrix $G'$ can serve any list of four columns from $I_k$.
\end{lemma}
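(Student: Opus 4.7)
The plan is to proceed by case analysis on the multiplicity type of the requested multiset $L=\{\bfe_{i_1},\bfe_{i_2},\bfe_{i_3},\bfe_{i_4}\}$, which falls into one of the types $(1,1,1,1)$, $(2,1,1)$, $(2,2)$, $(3,1)$, or $(4)$. Throughout I write $S_i=\{j_1(i),j_2(i)\}$ for the weight-two support of row $i$ of $A$ and $T_j=\{i':j\in S_{i'}\}$. The $3$-all-symbol PIR property of $G$ from Proposition~\ref{prop:t=3} supplies, for each $i\in[k]$, three pairwise disjoint recovery sets: the singleton $R^i_1=\{i\}$ and $R^i_{\ell}=\{k+j_\ell(i)\}\cup(T_{j_\ell(i)}\setminus\{i\})$ for $\ell\in\{2,3\}$. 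A short computation using that every row of $G'$ sums to zero, equivalently $\sum_{j=1}^{n+1}\bfg_j=\bfzero$, yields two further recovery sets for each $\bfe_a$ in $G'$:
\[
R^a_{p1}=\{n+1\}\cup\{k+j:j\notin S_a\}\cup\{i':S_{i'}\cap S_a=\emptyset\},
\]
\[
R^a_{p2}=\{n+1,k+j_1(a),k+j_2(a)\}\cup\{i':S_{i'}\cap S_a=\emptyset\}.
\]

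The types $(1,1,1,1)$ and $(4)$ are essentially immediate: four singletons for the former, and $R^a_1,R^a_2,R^a_3,R^a_{p1}$ for the latter (pairwise disjointness follows by inspection of the defining sets, since $R^a_{p1}$ uses only columns $k+j$ with $j\notin S_a$ and indices $i'$ with $S_{i'}\cap S_a=\emptyset$). For type $(3,1)$ with request $\{\bfe_a^{\,3},\bfe_b\}$, I would use $R^a_1,R^a_2,R^a_3$ for the three copies of $\bfe_a$ and $\{b\}$ for $\bfe_b$; if $b$ collides with $R^a_2$ or $R^a_3$ (not both, since that would force $S_a=S_b$), I replace the colliding set by $R^a_{p1}$, returning the columns of the old $R^a_{\ell}$ (and in particular $\{b\}$) to the free pool. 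For type $(2,1,1)$ with $\{\bfe_a^{\,2},\bfe_b,\bfe_c\}$, I take $\{a\},\{b\},\{c\}$ plus a fourth recovery set for $\bfe_a$ chosen as follows: if $S_a\not\subseteq S_b\cup S_c$ then one of $R^a_2,R^a_3$ has its $k+j$-index outside $S_b\cup S_c$ and is thus disjoint from $\{b\},\{c\}$; otherwise a brief case check forces $|S_a\cap S_b|=|S_a\cap S_c|=1$, so $b,c\notin\{i':S_{i'}\cap S_a=\emptyset\}$ and $R^a_{p2}$ is disjoint from $\{a\},\{b\},\{c\}$.

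The main obstacle is type $(2,2)$, $\{\bfe_a^{\,2},\bfe_b^{\,2}\}$, because instances exist in which each of the four pairs $(R^a_\ell,R^b_{\ell'})$ with $\ell,\ell'\in\{2,3\}$ conflicts at some row $i^{*}$ with $S_{i^{*}}=\{j_\ell(a),j_{\ell'}(b)\}$, and the parity-based sets $R^a_{p1},R^a_{p2}$ contain the column $b$ precisely when $S_a\cap S_b=\emptyset$. My approach here is a complement trick: set $R_1=\{a\}$, $R_3=\{b\}$, choose $R_2\in\{R^a_2,R^a_3\}$ avoiding the column $b$ (possible since $S_a\ne S_b$), and let $R_4=[n+1]\setminus(R_1\cup R_2\cup R_3)$ be the fourth recovery set. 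The columns in $R_1,R_2,R_3$ sum to $\bfe_a,-\bfe_a,\bfe_b$ respectively---the middle term because $\bfg_{k+j_\ell(a)}=-\sum_{i'\in T_{j_\ell(a)}}\bfe_{i'}$---for a total of $\bfe_b$; since $\sum_j\bfg_j=\bfzero$, the columns of $R_4$ sum to $-\bfe_b$, so $R_4$ spans $\bfe_b$ and is a valid recovery set. A straightforward count $|R_4|\ge n+1-(r+1)=k\ge 1$ ensures $R_4$ is non-empty, and this completes the case analysis.
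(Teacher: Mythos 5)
Your proof is correct and relies on the same core observation as the paper's: since every row of $G'$ sums to zero, the singleton $\{a\}$, the two redundancy-based recovery sets $R^a_2,R^a_3$, and their complement (your $R^a_{p1}$) partition $[n+1]$ into four disjoint recovery sets for $\bfe_a$, and for the $(2,2)$ case the complement of $\{a\}\cup R_2\cup\{b\}$ is a recovery set for $\bfe_b$ — which coincides with the paper's $(R^1_3\cup R^1_4)\setminus\{i_2\}$. The only divergence is that you treat the multiplicity types $(2,1,1)$ and $(3,1)$ by hand (introducing the auxiliary set $R^a_{p2}$), whereas the paper disposes of every type with at most one repeated vector by citing \cite[Lemma 3]{VARDY_YAAKOBI_2016}; your version is therefore more self-contained but not a different route.
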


\begin{proof} 
    We will first show that for any column $\bfg$ of $I_k$, the columns of $G'$ can be partitioned into four disjoint sets, each of which forms a recovery set for $\bfg$. So suppose $\bfg=\bfe_{i_0}$ for some $i_0\in[k]$. 
    
    For the first recovery set, take $R_1 := \{i_0\}$. For the second and third recovery sets, the construction of $G$ guarantees the existence of two redundancy columns whose entries in row~$i_0$ are equal to~$1$, and whose remaining support is disjoint. Denote these two columns by $\bfh_2$ and $\bfh_3$. For $R_2$ and $R_3$, we take the indices of these two columns together with the indices corresponding to the supports of $\bfh_{2}$ and $\bfh_{3}$ (excluding $i_0$), so that
    
    $$
         \bfe_{i_0} = - \left( \bfh_{2} + \sum_{i \in supp(\bfh_{2}) \setminus\{i_0\}} \bfe_i \right) = - \left( \bfh_{3} + \sum_{j \in supp(\bfh_{3})\setminus\{i_0\}} \bfe_j \right).
    $$
    
    Let $T$ be the remaining indices of columns of $G'$. Then, 
    \begin{align*}
        0 & =   \sum_{i\in R_1} \bfg_i + \sum_{j\in R_2}\bfg_j + \sum_{l\in R_3} \bfg_l + \sum_{m\in T}\bfg_m \\
        &= \bfg - \bfg - \bfg  + \sum_{m\in T}\bfg_m = - \bfg + \sum_{m\in T}\bfg_m,
    \end{align*}
    so that 
    $$
        \bfg  =  \sum_{m\in T}\bfg_m.
    $$
    We conclude that the columns with indices in $R_4 =  T$ can be taken as a fourth recovery set for $\bfg$, disjoint from the existing recovery sets, and that $[n+1] = R_1 \cup R_2 \cup R_3 \cup R_4$. 

Next, we need to show that $G'$ can also serve lists of four not necessarily equal columns from $I_k$. It is straightforward to verify that any list in which at most one vector appears more than once can already be served by $G'$ (see \cite[Lemma 3]{VARDY_YAAKOBI_2016}). Thus, the only remaining case is when $L = \{\bfe_{i_1}^2, \bfe_{i_2}^2\}$, where ${i_1}$ and ${i_2}$ are distinct elements from $[k]$. By the argument above, we may find recovery sets $R_1^1, R_2^1,R_3^1, R_4^1$ of $\bfe_{i_1}$ and $R_1^2, R_2^2,R_3^2, R_4^2$ of $\bfe_{i_2}$, such that, for each $j = 1,2,3,4$ we have
    \[
    \bfe_{i_1} = \sum_{i \in R_j^1} \alpha_i \bfg_i,
    \qquad
    \bfe_{i_2} = \sum_{i \in R_j^2} \beta_i \bfg_i,
\]
    with $\alpha_i, \beta_i \in \F_q$, and such that the four recovery sets of each vector form a partition: 
    $$
        \bigsqcup_{j=1}^4 R^1_{j} = \bigsqcup_{j=1}^4 R^2_{j} = [n+1].
    $$

    Moreover, without loss of generality we may assume that $\bfe_{i_2} \in R^1_{3}$ and that $R^2_{1} = \{\bfe_{i_2}\}$. Then, 
    \begin{align*}
        \bfe_{i_2} &= \alpha_{i_2}^{-1} \left( \bfe_{i_1} - \sum_{i \in R^1_{3}\setminus\{i_2\}} \alpha_i \bfg_i \right) \\
        &= \alpha_{i_2}^{-1} \left( \sum_{i \in R^1_{4}} \alpha_i\bfg_i - \sum_{i \in R^1_{3}\setminus\{i_2\}} \alpha_i \bfg_i \right),
    \end{align*}
    so we can choose $R^1_{1}, R^1_{2}, R^2_1$ and $(R^1_4 \cup R_3^1) \setminus\{i_2\}$ as our recovery sets. The result follows.
\end{proof}

Note that for PIR and batch codes we consider only unit vectors. Therefore, the above lemma shows that the results from \cite[Lemma 5]{VARDY_YAAKOBI_2016} and \cite[Lemma 14]{FAZELI2015} hold not just for binary codes, but for any $q$:

\begin{corollary}\label{COR:BatchFor_k=4_general_q}
    For every $q$, we have that
    $$
        B(k,4,q) = P(k,4,q) = P(k,3,q) + 1.
    $$
\end{corollary}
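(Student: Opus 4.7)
The plan is to establish the chain of inequalities
$$P(k,3,q)+1 \;\le\; P(k,4,q) \;\le\; B(k,4,q) \;\le\; P(k,3,q)+1,$$
which forces equality throughout and gives both asserted identities.

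For the upper bound I would unpack what was already proved. The matrix $G=(I_k\mid A)$ used in the proof of Theorem~\ref{thm:t=4} is a relabelling of the construction of Proposition~\ref{prop:t=3} and has length $k+r$; by Lemma~\ref{lem:B=P=k+r_For_t=3} this equals $P(k,3,q)$. Appending the single parity column yields $G'$, of length $P(k,3,q)+1$. Lemma~\ref{Lem:G'IsBatch} shows that $G'$ can serve any list of four columns drawn from the identity block $I_k$, and since the requests allowed in the PIR/batch definitions are exactly multisets of unit vectors $\{\bfe_1^{t_1},\dots,\bfe_k^{t_k}\}$ with $t_1+\cdots+t_k=4$, this exhibits $G'$ as a $4$-batch code. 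Hence $B(k,4,q)\le P(k,3,q)+1$. Combining with the trivial inequality $P(k,4,q)\le B(k,4,q)$, which follows from the observation that the lists $\{\bfe_i^4\}$ used in the PIR definition form a subfamily of those allowed in the batch definition, proves the right-hand half of the chain.

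For the lower bound I would invoke the standard puncturing argument for PIR codes, the PIR analogue of Proposition~\ref{prop:first}(iii): if $\mC$ is a $4$-PIR code of length $n$, then deleting any single coordinate yields a $3$-PIR code of length $n-1$. Indeed, for each unit vector $\bfe_i$ at most one of its four pairwise disjoint recovery sets contains the deleted coordinate, so at least three disjoint recovery sets survive (cf.\ \cite[Theorem~2]{YAAKOBI2020}). Thus $P(k,3,q)\le P(k,4,q)-1$, which rearranges to the required bound $P(k,4,q)\ge P(k,3,q)+1$.

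The corollary is essentially a bookkeeping step: the real content is Lemma~\ref{Lem:G'IsBatch}, and the only genuine point to check is that nothing in the construction or in the proof of that lemma relies on $q=2$. The $\pm 1$ entries of $A$ and the all-one parity column $\bf1$ are well defined over any $\F_q$, and the verification in Lemma~\ref{Lem:G'IsBatch} uses only that these entries sum to $0$ along appropriate subsets, which is characteristic-free. So the main (modest) obstacle is simply verifying this characteristic independence in passing; once that is noted, the chain of inequalities above closes immediately and extends the previously known binary results of \cite{VARDY_YAAKOBI_2016,FAZELI2015} to arbitrary prime powers $q$.
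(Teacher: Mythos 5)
Your proposal is correct and follows essentially the same route as the paper: the upper bound $B(k,4,q)\le P(k,3,q)+1$ comes from $G'$ via Lemma~\ref{Lem:G'IsBatch}, the inequality $P(k,4,q)\le B(k,4,q)$ is definitional, and the lower bound $P(k,3,q)+1\le P(k,4,q)$ is the standard puncturing argument. The paper leaves this bookkeeping implicit (it simply cites the lemma and the earlier binary references), so your explicit chain of inequalities is a cleanly written-out version of the same argument, and you correctly identify that the only genuine content beyond Lemma~\ref{Lem:G'IsBatch} is checking that the $\pm1$/$\mathbf{1}$ construction is characteristic-free.
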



We now turn our attention to $G''$.

\begin{lemma}\label{lemma:t=4_G''}
    For odd $q$ and for any $k$, the matrix $G''$ satisfies the 4-all-symbol batch property.
\end{lemma}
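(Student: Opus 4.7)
The plan is to establish the 4-all-symbol batch property for $G''$ in two stages mirroring the proof of Lemma~\ref{Lem:G'IsBatch}: first I would show that every column of $G''$ admits four pairwise disjoint recovery sets (the 4-ASP property), then use these recovery sets to handle the remaining multisets, reducing everything to the case $L=\{\bfv^2,\bfw^2\}$ with $\bfv\neq\bfw$.

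For the 4-ASP step I would split into three cases according to the type of column $\bfv$. If $\bfv=\bfe_i$, the four partition-style recovery sets constructed in the first paragraph of the proof of Lemma~\ref{Lem:G'IsBatch} transfer verbatim, since $G'$ is a submatrix of $G''$. If $\bfv=\bfh_l$ is a column of $A$, write $S_l\subseteq[k]$ for the set of rows on which $\bfh_l$ is nonzero; I would take $R_1=\{\bfh_l\}$, $R_2=S_l$, $R_3=\{p_1\}\cup([k]\setminus S_l)$ and $R_4=\{p_2\}\cup\{\bfh_{l'}:l'\neq l\}$. The first three are immediate, and for $R_4$ the identity $\sum_{l'=1}^{r}\bfh_{l'}=-2\,\mathbf{1}$ (each row of $A$ has exactly two nonzero entries, all equal to $-1$) yields $\bfh_l=-2\,p_2-\sum_{l'\neq l}\bfh_{l'}$, placing $\bfh_l$ in the $\F_q$-span of $R_4$ for every $q$ (in characteristic two the $2\,p_2$ term simply drops out). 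If $\bfv=\mathbf{1}$, the natural choice is $R_1=\{p_1\}$, $R_2=\{p_2\}$, $R_3=[k]$ (whose sum is $\mathbf{1}$) and $R_4=\{\bfh_1,\ldots,\bfh_r\}$ (whose sum is $-2\,\mathbf{1}$).

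For the two-pair case I would observe that each of the four 4-ASP recovery sets above partitions the columns of $G''$. Given a second column $\bfw\neq\bfv$, at most one recovery set of $\bfv$ contains $\bfw$, and $\{\bfw\}$ is always a singleton recovery set of $\bfw$; merging $\{\bfw\}$ with a re-expression of $\bfw$ using the remaining columns of the recovery set of $\bfv$ that contained $\bfw$, as in the final part of the proof of Lemma~\ref{Lem:G'IsBatch}, produces two disjoint recovery sets for each of $\bfv$ and $\bfw$.

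The main obstacle is Case~3 (the column $\mathbf{1}$) in characteristic two: the $R_4=\{\bfh_1,\ldots,\bfh_r\}$ constructed above has vanishing sum there, and $\mathbf{1}$ lies in the $\F_2$-span of the $\bfh_{l'}$ only when the graph on $r$ vertices whose edges are encoded by the rows of $A$ is bipartite. To handle this, I would split $[k]$ between $R_3$ and $R_4$ rather than allocating it entirely to $R_3$, and supplement $R_4$ with a judiciously chosen subset of the $\bfh_{l'}$, exploiting the freedom in choosing which weight-$2$ rows populate $A$ whenever $k<\binom{r}{2}$; this combinatorial step is the delicate content of the proof, consistent with the fact that Proposition~\ref{prop:t=4_q=2} treats $q=2^e$ separately to obtain exact values.
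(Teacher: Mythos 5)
Your proposal mirrors the paper's two-stage argument (first establish the $4$-all-symbol PIR property, then treat the remaining multisets), and your recovery sets for the columns of $I_k$ and of $A$ coincide with the paper's and are correct. What you flag as ``the main obstacle,'' however, is in fact a genuine gap that the paper's own proof passes over in silence. The paper simply declares the columns of $A$ to be a recovery set for the parity column $\mathbf{1}$. This is fine in characteristic $\neq 2$, because $\sum_{l}\bfh_l = -2\mathbf{1}$ is a nonzero multiple of $\mathbf{1}$, but over $\F_2$ this sum vanishes and $\mathbf{1}$ lies in the $\F_2$-column span of $A$ if and only if the graph whose edges are the rows of $A$ is bipartite. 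For $k = \binom{r}{2}$ the graph is $K_r$, which is non-bipartite for $r \geq 3$. Worse, for $k=6$, $r=4$, $q=2$ one can check that \emph{no two} recovery sets for $\mathbf{1}$ among the first $k+r=10$ columns of $G''$ are disjoint: every minimal recovery set has the form $\{\text{edges inside } S \text{ or inside } [4]\setminus S\}\cup\{\text{vertices in }S\}$ for some $S\subseteq[4]$, and all such sets pairwise intersect. With only $\{n+1\}$ and $\{n+2\}$ to add, the parity column then has at most three disjoint recovery sets, so $G''$ is not even $4$-all-symbol PIR over $\F_2$ in that case.

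Your proposed repair --- shifting part of $[k]$ from $R_3$ into $R_4$ and exploiting the freedom in choosing the rows of $A$ when $k < \binom{r}{2}$ --- is not carried out, and it cannot succeed when $k=\binom{r}{2}$ (no freedom at all) nor, as just noted, for $k=6$. So your proof has the same structure and, ultimately, the same hole as the paper's; its merit is that it explicitly locates the characteristic-two failure that the paper overlooks. A complete argument over $\F_2$, if the lemma is even true in full generality there, would require a genuinely different treatment of the parity column (or a different length-$(k+r+2)$ construction), not merely a more careful choice within the given construction.
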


\begin{proof}

    We first show that $G''$ satisfies the $4$-all-symbol PIR property. For columns of $I_k$ the previous lemma gives four disjoint recovery sets. For the parity column, the recovery sets are obtained by taking the two copies of the parity column, the columns of we $I_k$, and the columns of $A$ (here we exploit the fact that $q$ is odd). The case for columns of $A$ is slightly more complicated:

    Suppose we want to find four recovery sets for a column $\bfg_{i_0}$ with $k < i_0 \leq n$, i.e., a column of $A$. As a first recovery set we choose $R_1 = \{ i_0\}$. A second recovery set can be found using only columns from $I_k$, in fact one can choose
    $$
        R_2 := \{ i : i \in \supp \bfg_{i_0} \}.
    $$
    For the third recovery set we choose the remaining columns of $I_k$ together with one of the parity columns, i.e., 
    $$
        R_3 := [k]\setminus R_2 \cup \bfg_{n+2}.
    $$
    Finally, we claim that
    $$
        R_4 := \{k+1,  k+2, \dots, n+1\} \setminus \{ i_0 \},
    $$
    i.e., $R_4$ corresponds to all columns of $A$ except $\bfg_{i_0}$ together with a parity column. To see that this is a recovery set note that
    $$
        \sum_{i= k+1}^{n} \bfg_i= -2\cdot \bfg_{n+1},
    $$
    so we have
    $$
        \bfg_{i_0} = -\sum_{i= k+1}^{i_0-1} \bfg_{i}-\sum_{j= i_0+1}^{n} \bfg_{i} - 2\bfg_{n+1}.
    $$
    This shows that $R_4$ is also a recovery set, and we can conclude that $G''$ satisfies the $4$-all-symbol PIR property.

    To go from $4$-all-symbol PIR to $4$-all-symbol batch the only non-trivial thing we need to check is that a list of two copies of two distinct vectors can be served, i.e., that the list $L = \{\bfg_{i_1}^2,\bfg_{i_2}^2\}$ can be served for any $1\leq i_1 < i_2 \leq n+1$ (since $\bfg_{n+1} = \bfg_{n+2}$). If $i_1,i_2 \in [k]$ then this follows from Lemma \ref{Lem:G'IsBatch}. If $i_2 = n+1$, then we can simply choose $\{n+1\}$ and $\{n+2\}$ as recovery sets for $\bfg_{i_2}$, and $\{i_1\}$ and $[k]$ as recovery sets for $\bfg_{i_1}$. In all other cases, we have $k < i_2 < n+1$ and the following choice of recovery sets works:

    For $\bfg_{i_2}$ we choose $\{i_2\}$ as well as $R_4$ from above. Note that $R_4$ might contain $i_1$, so we cannot choose $\{i_1\}$ as a recovery set for $\bfg_{i_1}$ in general. However, the columns we have not used yet are exactly those in $I_k$ together with the all 1 vector $\bfg_{n+1}$, from which we can easily find two disjoint recovery sets for $\bfg_{i_1}$. In fact, the matrix $(I_k \mid \bfg_{n+1})$ even satisfies the $2$-functional PIR property.
    
    We conclude that all lists of $4$ columns from $G''$ can be served, and hence that $G''$ satisfies the $4$-all-symbol batch property.
\end{proof}

We are now ready to prove one of the main results of this section.

\begin{theorem} \label{thm:t=4} 
    For odd $q$, we have
    $$
        k+1 + \left\lceil \frac{1 + \sqrt{1 + 8k}}{2}\right\rceil \leq ASP(k,4,q) \leq ASB(k,4,q) \leq k+2 + \left\lceil \frac{1 + \sqrt{1 + 8k}}{2}\right\rceil.
    $$
\end{theorem}

\begin{proof}
    The lower bound is immediate from Propositions \ref{prop:first} and \ref{prop:t=3}. The upper bound follows from Lemma~\ref{lemma:t=4_G''}, in which we showed that \(G''\) satisfies the \(4\)-all-symbol batch property.
\end{proof}

Despite the fact that Theorem~\ref{thm:t=4} only applies when $q$ is odd, for even $q$ we are still able to determine the exact values of $ASP(k,4,q)$ and $ASB(k,4,q)$ for certain values of $k$:

\begin{proposition} \label{prop:t=4_q=2} 
    For $k \in S =\{1,2,3,4,5,7,8,11,12,16\}$ and $\ell \in \Z_{>0}$ we have 
    $$
        ASP(k,4,2^\ell)=ASB(k,4,2^\ell)= k+1 + \left\lceil \frac{1 + \sqrt{1 + 8k}}{2}\right\rceil,
    $$
    i.e., the lower bound from Theorem \ref{thm:t=4} is the true value for $k \in S$ when $q$ is even. 
\end{proposition}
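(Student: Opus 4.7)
The lower bound $ASP(k,4,2^\ell) \ge k+1+\lceil(1+\sqrt{1+8k})/2\rceil$ is already given by Theorem~\ref{thm:t=4}. Hence the content of the proposition is a matching upper bound on $ASB(k,4,2^\ell)$ for $k\in S$. My plan is to adapt the construction $G''=(I_k \mid A \mid \mathbf{1} \mid \mathbf{1})$ from the proof of Theorem~\ref{thm:t=4} by dropping one copy of the parity column, obtaining $G'=(I_k \mid A \mid \mathbf{1})$ of length $k+r+1$, and to exploit the flexibility in choosing the weight-$2$ graph $H$ on vertex set $[r]$ whose edges are encoded as the rows of $A$.

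Three verifications remain for $G'$: (i) each column of $A$ admits four pairwise disjoint recovery sets; (ii) so does the parity column $\mathbf{1}$; (iii) every mixed multiset of four columns can be served. Step (i) is automatic in characteristic~$2$: since each row of $A$ has weight~$2$, the sum of all columns of $A$ is zero, and the four sets $\{i_0\}$, $\{a,b\}$, $([k]\setminus\{a,b\})\cup\{n+1\}$, and $\{k+1,\dots,n\}\setminus\{i_0\}$ are pairwise disjoint recovery sets for the column $\bfe_a+\bfe_b$ at index $i_0$. Step (iii) follows the same template as Lemma~\ref{lemma:t=4_G''}: the only non-trivial case is a multiset $\{\bfg_{i_1}^2,\bfg_{i_2}^2\}$, handled by the usual rebalancing of two recovery sets.

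The heart of the argument, and the source of the particular set $S$, is step (ii); I expect this to be the main obstacle. A recovery set for $\mathbf{1}$ drawn from $[n]$ corresponds to a pair $(U,T)$, where $U\subseteq[r]$ indexes $A$-columns and $T\subseteq[k]$ is the set of edges of $H$ \emph{not} cut by the bipartition $(U,[r]\setminus U)$; the resulting vector sum equals $\mathbf{1}$ because every edge is accounted for exactly once. I would then show that if $H$ admits a proper $3$-coloring $[r]=P_1\sqcup P_2\sqcup P_3$, the three pairs $(P_j,T_{P_j})$ give pairwise disjoint recovery sets: the $P_j$'s are disjoint by the partition, and each edge (having endpoints in two of the three color classes) lies in exactly one $T_{P_j}$. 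Together with $\{n+1\}$ this yields four disjoint recovery sets for $\mathbf{1}$. Finally, by the classical Tur\'an-type formula, the maximum number of edges in a complete tripartite graph $K_{a,b,c}$ with $a+b+c=r$ and $a,b,c\ge 1$ is $ab+bc+ca$, which for the balanced choice of $(a,b,c)$ and $r=3,4,5,6,7$ yields the upper endpoints $3,5,8,12,16$ of $S$; combined with the lower endpoints $k>\binom{r-1}{2}$ (the defining property of $r$) and the trivial case $k=1$ treated by the repetition matrix $(1,1,1,1)$, these ranges match the set $S$ exactly.
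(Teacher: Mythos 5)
Your proposal follows essentially the same route as the paper: construct $G' = (I_k \mid A \mid \mathbf{1})$ and show it has the $4$-all-symbol batch property for $k \in S$, with the crux being the existence of four pairwise disjoint recovery sets for the parity column. The paper phrases this as partitioning the $A$-columns into three sets $A_1,A_2,A_3$ so that no two columns in the same set overlap in support, and rules out $k \notin S$ via QM--AM on $\binom{r}{2}-\sum_i\binom{r_i}{2}\ge k$; your proper-$3$-coloring of the graph $H$ and the Tur\'an-type bound $\max(ab+bc+ca)$ on $K_{a,b,c}$ is the same computation in different clothing (indeed $ab+bc+ca=\binom{r}{2}-\binom{a}{2}-\binom{b}{2}-\binom{c}{2}$). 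What the graph-coloring language buys you is a cleaner \emph{sufficiency} argument: the paper waves at ``for each $k\in S$ it can easily be checked that one can choose suitable subsets,'' whereas taking $H$ to be any $k$-edge subgraph of a balanced $K_{a,b,c}$ makes the construction explicit. Your verification that the three pairs $(P_j,T_{P_j})$ (together with $\{n+1\}$) are pairwise disjoint recovery sets summing to $\mathbf{1}$ is correct: each edge has endpoints in two color classes and hence lies in exactly one $T_{P_j}$.

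One small slip in step (i): you refer to ``the column $\bfe_a+\bfe_b$ at index $i_0$,'' but the \emph{rows} of $A$, not its columns, are weight-$2$ vectors; a column of $A$ has weight equal to the degree of the corresponding vertex in $H$, which is generally larger than $2$. The second recovery set should therefore be $\supp(\bfg_{i_0}) \subseteq [k]$ (of arbitrary size), and the third $([k]\setminus\supp(\bfg_{i_0}))\cup\{n+1\}$; with that correction your four sets partition $[n+1]$ and coincide with the paper's $R_1,\dots,R_4$. Step (iii) (rebalancing two of the four partitioning recovery sets to serve $\{\bfg_{i_1}^2,\bfg_{i_2}^2\}$) is left as a sketch but matches the paper's treatment, including the separate handling of the case where one of the two columns is the parity column.
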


\begin{proof}
    First, we prove that the parity column of $G'$ has four disjoint recovery sets if and only if $k \in S$. As we can take the column itself as one of the recovery sets, we need to find three additional recovery sets among the columns of $G$. Since each row of $G$ has weight exactly three, every column must belong to some recovery set and any two columns with overlapping support must be in disjoint recovery sets. 
    
    Consequently, the columns of $A$ must be partitioned into three sets $A_1, A_2, A_3$ such that no two columns within the same set overlap in support. If such a partition exists, then supplementing each $A_i$ with appropriate columns of $I_k$ yields three disjoint recovery sets for the parity column.
    
    Let $|A_i| = r_i$. Then
    $$
        r_1 + r_2 + r_3 = r = \left\lceil \frac{1 + \sqrt{1 + 8k}}{2}\right\rceil.
    $$
    
    For any of the $\binom{r_i}{2}$ pairs of columns in $A_i$ there is a weight-$2$ vector from $\F_2^r$ that cannot be used as a row in the construction of $A$ (as otherwise, the support of these two rows will overlap). This means that we must have 
    $$
        \binom{r}{2} - \left( \binom{r_1}{2} + \binom{r_2}{2} + \binom{r_3}{2}\right) \geq k,
    $$

    which can be rewritten as 
    $$
       r^2 - (r_1^2 + r_2^2 + r_3^2) \geq 2k.
    $$

    By the QM-AM inequality this implies 
    $$
        \frac{1}{3} \left\lceil \frac{1 + \sqrt{1 + 8k}}{2}\right\rceil^2 = \frac{1}{3}r^2 \geq k,
    $$
    and one can check that this holds only for $k \in S$. 

    Finally, for each $k \in S$ it can easily be checked that one can choose suitable subsets of columns of $A$.

    Next, we show that there are also four disjoint recovery sets for each column from $A$. In fact, we will show that there is a parition of the columns of $G'$ into four disjoint sets such that each of these is a recovery set, like for the columns of $I_k$.

    Suppose $\bfg$ is a column of $A$.
    By the construction of $A$, the sum of all the other columns of $A$ is equal to $g$ (here we use the fact that $q$ has characteristic $2$).  Hence, one may take as recovery sets: the column $\bfg$ itself; all other columns of $A$; the columns of $I_k$ whose supports intersect the support of $\bfg$; and, finally, the remaining columns of  $=I_k$ together with the parity column.   

    To go from the all-symbol PIR property to the all-symbol batch property, the only non-trivial thing we need to check is that $G'$ can serve a list with two copies of two distinct columns, not both in $I_k$. If the parity column does not appear, then we can use the same argument as in the proof of Lemma \ref{Lem:G'IsBatch}. So, suppose instead that there are two copies of the parity check column together with two copies of another column, say $\bfg_{i_1}$. 
    
    We know that $\bfg_{i_1}$ has four recovery sets that form a partition of the columns of $G'$. Let $R_1,R_2,R_3,R_4\subseteq [n+1]$ be the sets of indices that correspond to these recovery sets. We may assume, without loss of generality, that $R_1 = \{i_1\}$ and that $n+1\in R_4$.

    Now, 
    $$
        \bfg_{n+1} = \sum_{i \in [n]} \bfg_i  = \bfg_{i_0} + \sum_{i\in R_2}\bfg_i + \sum_{j\in R_3}\bfg_j + \sum_{l \in R_4\setminus\{n+1\}} \bfg_l = 3\cdot\sum_{j\in R_3}\bfg_j + \sum_{l \in R_4\setminus\{n+1\}} \bfg_l,
    $$
    so we can take the columns corresponding to $R_1$ and $R_2$ as recovery sets for $\bfg_{i_1}$, and $\{n+1\}$ and $R_3 \cup R_4\setminus \{n+1\}$ as recovery sets for $\bfg_{n+1}$. This finishes the proof.
\end{proof}

\begin{remark}
The above proof shows that $G'$ can serve any list of four columns from $G$ when $q$ is a power of $2$. This is not the case in general, i.e., there is no hope that the proof can be modified to apply for all $q$. However, the part of the proof that shows $G'$ does not satisfy the $4$-all-symbol batch property for $k \not \in S$ holds for any $q$, not just in characteristic 2.
\end{remark}

In general, we are not able to determine the exact values of $ASP(k,4,q)$ and $ASB(k,4,q)$. However, we computed several specific cases by exhaustive computer search. For instance, we found an explicit $5 \times 10$-matrix over a field of characteristic three that satisfies the $4$-all-symbol batch property; see Example~\ref{ex:asp(5,4,3)}. This implies that $ASP(5,4,3^\ell) = ASB(5,4,3^\ell) = 10$, i.e., the lower bound is met in this case. We also verified computationally that $ASP(6,4,2)  > 11$, and hence 
$$
    ASP(6,4,2) = ASB(6,4,2) = 12.
$$
In particular, this shows that the known relation $B(k,4,2) = P(k,4,2) = P(k,3,2) + 1$ does not extend directly to the all-symbol PIR/batch setting.

\begin{example}\label{Ex:t=4}
We present the construction for dimensions $5$ and $6$. For $k=5$, the matrix~$G'$ takes the following form:
\begin{align}
G' =
\left(
\begin{array}{ccccc|cccc|c}
1 & 0 & 0 & 0 & 0 &  -1 & -1 & 0 & 0 & 1 \\
    0 & 1 & 0 & 0 & 0  & -1 & 0 & -1 & 0 & 1 \\
     0 & 0 & 1 & 0 & 0  & -1 & 0 & 0 & -1 & 1 \\
      0 & 0 & 0 & 1 & 0  & 0 & -1 & -1 & 0 & 1 \\
       0 & 0 & 0 & 0 & 1  & 0 & -1 & 0 & -1 & 1  
\end{array}
\right).
\end{align}
\vspace{-0.7cm}
\[
\hspace{1cm}
\underbrace{\hspace{2.6cm}}_{I_5} 
\hspace{0.2cm}
\underbrace{\hspace{3cm}}_{A}
\hspace{0.2cm}
\underbrace{\hspace{-90cm}}_{P}
\]

One can verify that when $q$ has characteristic $2$, 
$G'$ satisfies the $4$-all-symbol batch property, as stated 
in Proposition~\ref{prop:t=4_q=2}. Conversely, if the 
characteristic of $q$ is not $2$, it can be verified that 
the sixth column (i.e., the first column of $A$) does not have 
four disjoint recovery sets.

For the case $k=6$, a single parity column is insufficient, even when $q$ is a power of $2$. In this case, the matrix $G''$ takes the following form:
\begin{align}
G'' =
\left(
\begin{array}{cccccc|cccc|cc}
1 & 0 & 0 & 0 & 0 &  0&-1 & -1 & 0 & 0 & 1 & 1\\
    0 & 1 & 0 & 0 & 0 &0 & -1 & 0 & -1 & 0 & 1 & 1\\
     0 & 0 & 1 & 0 & 0 & 0 & -1 & 0 & 0 & -1 & 1 & 1\\
      0 & 0 & 0 & 1 & 0  & 0 & 0 & -1 & -1 & 0 & 1 & 1\\
       0 & 0 & 0 & 0 & 1  & 0 & 0 & -1 & 0 & -1 & 1 & 1\\
       0 & 0 & 0 & 0 & 0  & 1 & 0 & 0 & -1 & -1 & 1 & 1
\end{array}
\right).
\end{align}
\vspace{-0.7cm}
\[
\hspace{0.8cm}
\underbrace{\hspace{3cm}}_{I_6} 
\hspace{0.3cm}
\underbrace{\hspace{3cm}}_{A}
\hspace{0.3cm}
\underbrace{\hspace{0.8cm}}_{P}
\]
\end{example}

\begin{example}\label{ex:asp(5,4,3)}
Let $q$ be a power of $3$. Then, the following $5 \times 10$-matrix, which we found using a computer search, satisfies the $4$-all-symbol batch property:
$$
\left(
\begin{array}{ccccc|ccccc}
    1 & 0 & 0 & 0 & 0  & 0 & 2 & 1 & 0 & 1 \\
    0 & 1 & 0 & 0 & 0  & 1 & 0 & 0 & 1 & 2 \\
    0 & 0 & 1 & 0 & 0  & 1 & 0 & 2 & 0 & 2 \\
    0 & 0 & 0 & 1 & 0  & 2 & 1 & 0 & 2 & 0 \\
    0 & 0 & 0 & 0 & 1  & 0 & 2 & 1 & 1 & 0 
\end{array}
\right).
$$
\end{example}

\section{ASP and ASB Properties of Codes with Fixed Parameters}\label{Sec:Propetries_Of_Known_Codes}

In this section, we study how the parameters of a linear code influence its all-symbol PIR and all-symbol batch properties. More precisely, we give bounds on the number $t$ for which a code can be $t$-all-symbol PIR or $t$-all-symbol batch, depending on its length, dimension, minimum distance, and dual minimum distance. Moreover, we investigate how two families of codes (MDS and simplex codes) perform with respect to these bounds. In passing, we establish new cases for an open conjecture from~\cite{YAAKOBI2020}.

\subsection{General Bounds}

We will repeatedly use the following well-known result concerning recovery sets. Recall that a codeword $\bfx \in \mC$ is \textit{minimal} if its support is minimal with respect to inclusion.
\begin{lemma} \label{lem:dual}
Let $\mC \le \F_q^n$ be a code with generator matrix $G \in \F_q^{k \times n}$ and let $\mC^\perp \le \F_q^n$ be its dual. The minimal recovery sets (with respect to inclusion) of size larger than one for the $i$-th column of $G$ are in one-to-one correspondence with minimal codewords $\bfx \in \mC^\perp$ with $i \in \supp(\bfx)$.
\end{lemma}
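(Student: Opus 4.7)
The plan is to construct explicit maps in both directions between the minimal recovery sets of size greater than one for $\bfg_i$ and the minimal codewords $\bfx \in \mC^\perp$ with $i \in \supp(\bfx)$ (normalized so that $x_i = 1$, to resolve the obvious scalar ambiguity), and then verify that they are mutually inverse. Before defining either map, I would record two preparatory observations about a minimal recovery set $R$ of size $>1$ for $\bfg_i$. First, $i \notin R$, since otherwise $\{i\} \subsetneq R$ would already be a recovery set, contradicting minimality. Second, the columns $\{\bfg_j : j \in R\}$ are $\F_q$-linearly independent: any nontrivial dependence among them would let us eliminate one column from the expansion $\bfg_i = \sum_{j \in R} \alpha_j \bfg_j$ and produce a strictly smaller recovery set. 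As a consequence, this expansion is unique and all $\alpha_j$ are nonzero, so $\supp(\alpha) = R$ on the nose.

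For the forward map, given such an $R$ with expansion $\bfg_i = \sum_{j \in R} \alpha_j \bfg_j$, I would define $\bfx \in \F_q^n$ by $x_i = 1$, $x_j = -\alpha_j$ for $j \in R$, and $x_\ell = 0$ otherwise. Then $\sum_{\ell} x_\ell \bfg_\ell = 0$ shows $\bfx \in \mC^\perp$, and $\supp(\bfx) = \{i\} \cup R$. To verify that $\bfx$ is minimal in $\mC^\perp$, assume towards a contradiction that there is a nonzero $\bfy \in \mC^\perp$ with $\supp(\bfy) \subsetneq \supp(\bfx)$. If $i \in \supp(\bfy)$, rescaling so $y_i = 1$ yields $\bfg_i = -\sum_{j \in \supp(\bfy)\setminus\{i\}} y_j \bfg_j$, producing a recovery set strictly contained in $R$; if $i \notin \supp(\bfy)$, then $\supp(\bfy) \subseteq R$ gives a nontrivial linear dependence among $\{\bfg_j : j \in R\}$. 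Both cases contradict the observations above. The backward map is symmetric: given a minimal $\bfx \in \mC^\perp$ with $i \in \supp(\bfx)$, normalize $x_i = 1$ and set $R := \supp(\bfx) \setminus \{i\}$. Then $\bfg_i = -\sum_{j \in R} x_j \bfg_j$ shows $R$ is a recovery set; and any strictly smaller recovery set $R' \subsetneq R$ would, via the forward construction, produce a codeword of strictly smaller support than $\bfx$, contradicting minimality.

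A direct substitution then shows that the two maps are mutually inverse on the normalized objects, completing the bijection. The only genuine subtlety is the bookkeeping around scalar multiples of codewords, which I would handle once and for all by fixing $x_i = 1$; beyond this, every step reduces to a short algebraic computation, so I do not expect a serious obstacle.
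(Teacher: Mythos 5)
The paper does not actually prove Lemma~\ref{lem:dual}: it is stated as a well-known fact (it is the classical dictionary between recovery sets and dual codewords that underlies one-step majority-logic decoding and locality). Your proof is the standard argument and is correct: you correctly observe that for a minimal recovery set $R$ with $|R|>1$ one has $i\notin R$ and the columns indexed by $R$ are linearly independent with a unique, fully supported expansion of $\bfg_i$, you build the forward map $R\mapsto\bfx$ with $\supp(\bfx)=\{i\}\cup R$ and verify minimality in $\mC^\perp$, and the backward map $\bfx\mapsto\supp(\bfx)\setminus\{i\}$ (after normalizing $x_i=1$) is checked to land on a minimal recovery set, with the two maps mutually inverse. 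Your device of fixing $x_i=1$ is exactly the right way to kill the scalar ambiguity, which the lemma statement leaves implicit.

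Two very small points worth being aware of, though neither is a flaw in your argument. First, when you say a strictly smaller recovery set ``contradicts minimality,'' you are implicitly reading ``minimal of size larger than one'' as ``minimal among \emph{all} recovery sets, and of size larger than one''; under the weaker reading ``minimal among recovery sets of size larger than one,'' the linear-independence step would not go through (e.g.\ $R=\{j_1,j_2\}$ with $\bfg_{j_1}=\bfg_{j_2}$). The stronger reading is the intended one. Second, as literally stated the lemma has a boundary case: if $\bfg_i$ is a nonzero scalar multiple of some other column $\bfg_j$, the minimal dual codeword supported on $\{i,j\}$ maps to the size-one recovery set $\{j\}$, which is excluded from the left-hand side. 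This is an imprecision in the lemma (harmless in the paper's applications, and vacuous once $d^\perp\ge 3$), not in your proof.
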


We begin with the following result on $t$-all-symbol PIR codes. This bound is classical in the context of one-step majority-logic decoding and appears, for example, in~\cite[Theorem 8.1]{lin2001error}. Although the result is well known, we include a proof for completeness and to help the reader’s understanding. 

\begin{proposition} \label{prop:tasp}
    Let $\mC \le \F_q^n$ be a code with $d^\perp:=d(\mC^\perp) >1$. If $\mC$ is a $t$-all-symbol PIR code, then
    \begin{align*}
        t \le \left\lfloor\frac{n-1}{d^\perp-1}\right\rfloor+1.
    \end{align*}
\end{proposition}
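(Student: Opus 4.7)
The strategy is to fix some column index $i \in [n]$, consider the $t$ pairwise disjoint recovery sets for $\bfg_i$ guaranteed by the $t$-all-symbol PIR property, and bound each one from below using the dual distance. Since these sets are pairwise disjoint subsets of $[n]$, the sum of their sizes is at most $n$, so the result will follow from a good lower bound on each.

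Concretely, I would proceed as follows. Let $R_1,\dots,R_t$ be $t$ pairwise disjoint recovery sets for $\bfg_i$. Since they are disjoint, at most one of them can contain the index $i$; if such a set exists, relabel so that it is $R_1$, and otherwise let $R_1$ be arbitrary. Then for every $l \ge 2$ we have $i \notin R_l$. The condition $\bfg_i \in \langle \bfg_j : j \in R_l \rangle$ means there exist coefficients $\alpha_j \in \F_q$ with $\bfg_i = \sum_{j \in R_l}\alpha_j \bfg_j$; equivalently, the vector $\bfx \in \F_q^n$ with $x_i = 1$, $x_j = -\alpha_j$ for $j \in R_l$, and $x_j = 0$ otherwise lies in $\mC^\perp$ and is nonzero (because $x_i = 1$). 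This is essentially the content of Lemma~\ref{lem:dual}, although I do not need the minimality statement, only the existence of such a codeword.

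Since $\supp(\bfx) \subseteq R_l \cup \{i\}$, the weight of $\bfx$ is at most $|R_l| + 1$, and by definition of $d^\perp$ this forces $|R_l| + 1 \ge d^\perp$, i.e., $|R_l| \ge d^\perp - 1$ for each $l \ge 2$. Combined with the trivial bound $|R_1| \ge 1$ and the disjointness of the $R_l$,
\[
n \;\ge\; \sum_{l=1}^t |R_l| \;\ge\; 1 + (t-1)(d^\perp - 1),
\]
which rearranges to $t \le \tfrac{n-1}{d^\perp - 1} + 1$, as required.

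There is no real obstacle here; the proof is essentially a counting argument. The only minor subtlety is remembering that at most one recovery set can contain $i$ itself, which is what produces the $+1$ term on the right-hand side: if none of the recovery sets contained $i$, one would actually obtain the stronger bound $t \le n/(d^\perp - 1)$. The hypothesis $d^\perp > 1$ is used only to ensure that $d^\perp - 1 \ge 1$, so that the division in the final step makes sense and the lower bound $|R_l| \ge d^\perp - 1$ is nontrivial.
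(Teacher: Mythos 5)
Your proof is correct and takes essentially the same approach as the paper: both arguments fix a column, observe that at most one of the $t$ disjoint recovery sets can contain its index, convert each of the remaining $t-1$ recovery sets into a nonzero dual codeword of weight at least $d^\perp$, and then count coordinates. The only cosmetic difference is that you construct the dual codeword explicitly, whereas the paper invokes Lemma~\ref{lem:dual} as a black box.
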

\begin{proof}
Without loss of generality suppose we want to recover $\bfg_1$ $t$ times. We can use $\bfg_1$ once, and then we need to find $t-1$ disjoint recovery sets for $\bfg_1$. By Lemma~\ref{lem:dual} this is equivalent to asking for $t-1$ codewords $\bfx_1,\dots, \bfx_{t-1} \in \mC^\perp$ with $\supp(\bfx_i) \cap \supp(\bfx_j) = \{1\}$ for all $i \ne j$ and $i,j \in [t-1]$. The minimal cardinality of the support of an element in $\mC^\perp$ is $d^\perp$. Thus, we need
\begin{align*}
    (t-1)(d^\perp-1) \le n-1.
\end{align*}
The inequality comes from the fact that without the coordinate 1, the codewords all have support of size at least $d^\perp-1$, they need to be disjoint outside of 1, and they can \emph{cover} at most $n-1$ coordinates. The statement of the proposition follows.
\end{proof}

We recall the definitions of shortening and puncturing of a code.

\begin{definition} \label{short/punct}
    Let $\mC \leq \F_q^n$ be a linear code and $A \subseteq [n]$. 
    \begin{itemize}
        \item[(i)] $\supp(\bfx):=\{i : x_i \ne 0\}$ denotes the \textbf{support} of $\bfx=(x_1,\dots,x_n) \in \F_q^n$;
        \item[(ii)] $\mC(A):=\{\bfx \in \mC : \supp(\bfx) \subseteq A\}$ is the \textbf{shortening} of $\mC$ by the set $A$;
        \item[(iii)] $\pi_{A}(\mC):=\{\pi_{A}(\bfx) : \bfx \in \mC\}$, where $\pi_A:\F_q^n \to \F_q^{|A|}$ is the projection onto the coordinates indexed by $A$, is the \textbf{puncturing} of $\mC$ by the set $A$.
    \end{itemize}
\end{definition}

The bound from Proposition~\ref{prop:tasp} clearly also holds for $t$-all-symbol batch codes. However, tweaking the statement of Proposition~\ref{prop:tasp} to match the property of being $t$-all-symbol batch, we obtain the following stronger bound.

\begin{proposition}\label{cor:tasb}
Let $\mC \le \F_q^n$ be a code with $d^\perp:=d(\mC^\perp) >1$. If $\mC$ is an $t$-all-symbol batch code, then for all $1 \le s \le n-1$ we have
\begin{align*}
        t \le \left\lfloor\frac{n-s}{\max\{d(\mC^\perp(S)) : S \subseteq [n], |S|=n-s+1\}-1}\right\rfloor+s.
    \end{align*}
\end{proposition}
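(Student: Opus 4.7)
The plan is to reduce the statement to Proposition~\ref{prop:tasp} via a puncturing argument: for any $S \subseteq [n]$ with $|S| = n - s + 1$, I would show that the punctured code $\pi_S(\mC)$ is $(t - s + 1)$-all-symbol batch, then apply the PIR bound of Proposition~\ref{prop:tasp} to it, and finally optimize the choice of $S$.

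For the puncturing step, let $A = [n] \setminus S$, so $|A| = s - 1$. Iterating Proposition~\ref{prop:first}(iii) exactly $s - 1$ times (once for each column indexed by $A$) shows that $\pi_S(\mC)$ is $(t - s + 1)$-all-symbol batch. By Proposition~\ref{prop:first}(ii) it is therefore also $(t - s + 1)$-all-symbol PIR. Using the standard duality $\pi_S(\mC)^\perp = \mC^\perp(S)$, the dual minimum distance of $\pi_S(\mC)$ equals $d(\mC^\perp(S))$.

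Applying Proposition~\ref{prop:tasp} to $\pi_S(\mC)$, which has length $n - s + 1$ and dual minimum distance $d(\mC^\perp(S)) > 1$, yields
\begin{align*}
    t - s + 1 \;\le\; \frac{(n - s + 1) - 1}{d(\mC^\perp(S)) - 1} + 1 \;=\; \frac{n - s}{d(\mC^\perp(S)) - 1} + 1,
\end{align*}
which rearranges to $t \le \frac{n - s}{d(\mC^\perp(S)) - 1} + s$. Since this inequality holds for every $S \subseteq [n]$ with $|S| = n - s + 1$, I would finish by choosing $S$ to maximize $d(\mC^\perp(S))$, thereby minimizing the right-hand side, and recovering the stated bound.

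The main (mild) obstacle is justifying the puncturing step. Formally it is handled by Proposition~\ref{prop:first}(iii), but one should keep in mind the small recovery-set exchange that underlies it: given a serving of the multiset $L' \cup \{\bfg_{i^\ast}\}$ produced by the batch property of $\mC$, if the coordinate $i^\ast$ of the deleted column is used by some $R_\ell^{\mathrm{old}}$, I would replace that set by $(R_\ell^{\mathrm{old}} \setminus \{i^\ast\}) \cup R_{i^\ast}^{\mathrm{old}}$ and take $\{i^\ast\}$ as the new recovery set for $\bfg_{i^\ast}$. The resulting sets remain pairwise disjoint recovery sets, and all the recovery sets for $L'$ now live in $S$, giving a valid serving of $L'$ in $\pi_S(\mC)$.
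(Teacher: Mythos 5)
Your proof is correct and arrives at the same bound, but via a cleaner puncturing reduction rather than the paper's direct combinatorial count. The paper fixes a worst-case request (one column repeated $t-s+1$ times, $s-1$ others once each, using singleton recovery sets for the latter) and directly counts the disjoint supports of the $t-s$ dual codewords needed for the remaining recovery sets, noting these must live in $\mC^\perp$ shortened to the $n-s+1$ coordinates avoiding the other $s-1$ requested columns. You package exactly this idea as: puncture away $s-1$ coordinates, observe (via the recovery-set exchange argument, which you make explicit) that the punctured code remains $(t-s+1)$-all-symbol batch, hence $(t-s+1)$-all-symbol PIR, and then invoke Proposition~\ref{prop:tasp} together with $\pi_S(\mC)^\perp \cong \mC^\perp(S)$. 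Your route has two small advantages: it makes rigorous the ``WLOG singleton recovery sets'' step that the paper leaves implicit, and it avoids a notational slip in the paper's proof, which uses $S$ for the size-$s$ request set but then writes $\mC^\perp(S)$ where the shortening to the size-$(n-s+1)$ complement (plus the repeated index) is meant. Two minor points worth tidying: Proposition~\ref{prop:tasp} requires $d(\pi_S(\mC)^\perp)>1$; if this fails for a given $S$ that $S$ yields a vacuous bound and cannot attain the maximum, so the hypothesis is harmless but should be checked. And what you actually need is the matrix-level fact underlying Proposition~\ref{prop:first}(iii) (deleting a column of a $t$-ASB matrix yields a $(t-1)$-ASB matrix), not the statement about optimal lengths itself; your exchange argument correctly reproves that fact.
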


\begin{proof}
Suppose we want to recover a total of $1 \le s \le t \le n-1$ different columns which are indexed by $S$, where $s < t$ means we want to recover at least one of them more than once. Suppose we want to recover one of the $s$ columns $t-s+1$ times and all other columns once. Then, apart from the $s$ columns we use as recovery sets of size 1, we need $t-s$ disjoint recovery sets for that specific column. Without loss of generality say the column we want to recover is the first one. For this, we need $t-s$ codewords $\bfx_1,\dots,\bfx_{t-s} \in \mC^\perp$ with $\supp(\bfx_i) \cap \supp(\bfx_j) = \{1\}$ for all $i \ne j$ and $i,j \in [t-s]$. In particular, since the recovery sets have size at least $d^\perp-1$, and the codewords $\bfx_1,\dots,\bfx_{t-s}$ need to be contained in $\mC^\perp(S)$, we get
\begin{align*}
    (t-s)\left(d(\mC^\perp(S))-1\right) \le n-s.
\end{align*}
Since this has to hold for all $1 \le s \le t \le n-1$ and all $S \subseteq [n]$ with $|S|=s$ we obtain the upper bound of the proposition.
\end{proof}

By setting $s=1$ in Corollary~\ref{cor:tasb} we recover Proposition~\ref{prop:tasp}.

\begin{remark}
For a code $\mC \le \F_q^n$ with generator matrix $G$ with repeated columns, the bound of Proposition~\ref{prop:tasp} can only be attained in the case where $\mC$ is the repetition code. This is because if there are repeated columns, then $d^\perp=2$ and so the upper bound of Proposition~\ref{prop:tasp} can only be attained if $t=n$. Therefore, any column of $G$ can be recovered $n$-times with recovery sets of size $1$, implying that $G=(\bfg,\dots,\bfg) \in \F_q^{1 \times n}$ for some $\bfg \in \F_q$, and $\mC$ is the repetition code.
\end{remark}

\begin{proposition} \label{prop:jon}
Let $\mC \le \F_q^n$ be a $t$-all-symbol PIR code where any list $\{\bfg_i^t\}$ for $i \in [n]$ can be served with recovery sets of size at most $r$, for some $r \le n/t$. Then $\mC$ is $(\lfloor t/r\rfloor+1)$-all-symbol batch.
\end{proposition}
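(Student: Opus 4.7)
The plan is to establish the conclusion via a greedy argument. Using the $t$-all-symbol PIR hypothesis together with $r \le n/t$, I would first fix for every column $\bfg_i$ of a generator matrix $G$ a family $\mS_i = \{S_{i,1},\dots,S_{i,t}\}$ of $t$ pairwise disjoint recovery sets for $\bfg_i$, each of size at most $r$. Given a request multiset $L = \{\bfg_{i_1},\dots,\bfg_{i_s}\}$ of size $s := \lfloor t/r \rfloor + 1$, the goal is to pick $R_j \in \mS_{i_j}$ for each $j \in [s]$ such that the $R_j$ are pairwise disjoint, and I would do this sequentially.

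The key computation is a short feasibility argument at each step of the greedy. After $j-1$ selections, the used set $U_j := R_1 \cup \dots \cup R_{j-1}$ has $|U_j| \le (j-1)r$. Because the sets in $\mS_{i_j}$ are pairwise disjoint, any coordinate in $U_j$ meets at most one of them, so at most $(j-1)r$ sets in $\mS_{i_j}$ are blocked and at least $t - (j-1)r$ remain available. The greedy therefore continues as long as $t - (j-1)r \ge 1$, and substituting $j = s = \lfloor t/r \rfloor + 1$ this reduces to $t > \lfloor t/r \rfloor \cdot r$, which is immediate whenever $r$ does not divide $t$. In this generic regime the algorithm terminates and produces the required $s$ pairwise disjoint recovery sets, giving the $s$-all-symbol batch property directly.

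The main obstacle will be the borderline case $r \mid t$, where the naive count only guarantees $t - (s-1)r = 0$ surviving sets at the final step and the sequential greedy can in principle get stuck. To handle this, I would refine the accounting by observing that since $rt \le n$ holds by assumption, the union $U_{s-1}$ of previously chosen sets need not lie inside the cover $\bigcup_\ell S_{i_s,\ell}$ of $\mS_{i_s}$, and a more careful choice of both the ordering of $L$ and the set picked from each $\mS_{i_j}$ (for example, always taking the smallest available set, or postponing the columns whose cover intersects the fewest others) should still produce a valid $R_s$. Alternatively, I would abandon the sequential greedy and cast the whole assignment as a bipartite matching problem: on one side place the $s$ requests and on the other side the disjoint union of the families $\mS_{i_j}$, with an edge whenever a requested column can be served by that set, and extract the $s$ disjoint recovery sets simultaneously via a Hall-type argument rather than one-by-one.
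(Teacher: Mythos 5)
Your proof proposal identifies the same greedy framework as the paper, but it is missing the one observation that the paper uses to close the borderline case you flag. Specifically: for the \emph{first} request $\bfg_{a_1}$ served, the singleton $\{a_1\}$ can always be chosen as its recovery set, because the coordinate $a_1$ lies in at most one of the $t$ pairwise disjoint recovery sets guaranteed by the $t$-all-symbol PIR hypothesis, and whichever set contains it can be replaced by $\{a_1\}$ without disturbing disjointness. With one singleton in the pool, after $\ell$ requests have been served the used columns number at most $r(\ell-1)+1$ rather than your $\ell r$, a saving of $r-1$ columns. This is precisely what you need: substituting $\ell = \lfloor t/r \rfloor$ gives $r(\lfloor t/r \rfloor - 1) + 1 \le t - r + 1 < t$ for every $r \ge 2$, \emph{including} when $r \mid t$, so the pigeonhole count succeeds at every step and the greedy never stalls.

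Your two fallbacks for the $r \mid t$ case are not carried out and, as sketched, do not close the gap. ``Always taking the smallest available set'' is close to the right intuition, but unless you first argue that the singleton $\{a_1\}$ can always be made one of the available choices (which is exactly the swap above), you have no control over how small the smallest available set actually is, and the count does not improve. Reformulating the assignment as a matching problem is also more delicate than described: what you need is for the chosen sets to be pairwise disjoint as subsets of $[n]$, which is a system-of-disjoint-representatives (hypergraph-matching) condition rather than a bipartite-matching condition; the relevant Hall-type sufficient criterion (Aharoni--Haxell) is far heavier machinery than the one-line pigeonhole argument and would still require a nontrivial feasibility check. So, as written, your proof has a genuine gap precisely when $r \mid t$, and the missing ingredient is the observation that one recovery set can always be taken to be a singleton.
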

\begin{proof}
Let $\bfg_{a_1},\dots,\bfg_{a_{\lfloor t/r\rfloor+1}}$ be the list of columns that we want to serve. Suppose we have already chosen disjoint recovery sets for $\bfg_{a_1},\dots,\bfg_{a_\ell}$ of size at most $r$ for some $0 \le \ell < \lfloor t/r\rfloor+1$. At least one of the recovery sets can be chosen as the column itself, and so the number of columns that are being used so far is at most $r(\ell-1)+1$. By the assumption that $\mC$ is $t$-all-symbol PIR, we have $t$ recovery sets of size at most $r$ for $\bfg_{a_{\ell+1}}$. Each column used for recovering $\bfg_{a_1},\dots,\bfg_{a_\ell}$ can be in at most one of the $t$ recovery sets of $\bfg_{a_{\ell+1}}$. We have used strictly less than $t$ columns since
\begin{align*}
    r(\ell-1)+1 \le r\left(\lfloor t/r\rfloor-1\right)+1 < t,
\end{align*}
so it follows from the Pigeonhole principle that there is at least one of the $t$ recovery sets of $\bfg_{a_{\ell+1}}$ which contains only unused columns. This means that we can recover $\bfg_{a_{\ell+1}}$. We can repeat this process until we found all $\lfloor t/r\rfloor$ recovery sets.
\end{proof}

\subsection{MDS Codes}
Suppose $\mC \le \F_q^n$ is an MDS code of dimension $k$. Since the dual of an MDS code is also MDS, we have $d^\perp=k+1$ and the bound of Proposition~\ref{prop:tasp} reads as
\begin{align*}
    t \le \left\lfloor\frac{n-1}{d^\perp-1}\right\rfloor+1=\left\lfloor\frac{n-1}{k}\right\rfloor+1.
\end{align*}
To explicitly compute $t$ such that $\mC$ is $t$-all-symbol PIR, note that in a generator matrix $G$ of $\mC$, any~$k$ columns are linearly independent. In particular, a column can be recovered only from the recovery set of size one, or any set of $k$ different columns. This gives that $\mC$ is $t$-all-symbol PIR with
\begin{align*}
    t=\left\lfloor\frac{n-1}{k} \right\rfloor+1.
\end{align*}
If we look at $\mC$'s all-symbol batch properties, we see that serving a number of requests made of different columns is \textit{easier} than serving a single column the same number of times. More precisely, suppose we have set $\{\bfg_1^{t_1},\dots,\bfg_{\ell}^{t_{\ell}}\}$ of requests made of the columns of $G$. We can allocate to all requests a recovery set of size one, given by the corresponding column, and then need any set of $k$ different columns for all remaining requests. Therefore $\mC$ is also $t$-all-symbol batch. 

We conclude that the bound of Proposition~\ref{prop:tasp} is met with equality for all $k$ and $n$.

\subsection{Simplex Code}
Suppose $\mC \le \F_2^n$ is the simplex code of dimension $k$ and length $n=2^k-1$. The dual of the simplex code is the Hamming code, which has minimum distance $d^{\perp}=3$. Therefore the bound of Proposition~\ref{prop:tasp} reads
\begin{align*}
    t \le \left\lfloor\frac{n-1}{d^\perp-1}\right\rfloor+1=\left\lfloor\frac{2^k-2}{3-1}\right\rfloor+1=2^{k-1}.
\end{align*}
Since the columns of the generator matrix $G$ of $\mC$ are all non-zero vectors in $\F_2^k$, apart from the recovery set of size 1, for every fixed column one can partition the vectors of $\F_2^k$ into two-sets, where each two-set is a recovery set for that fixed column. Because of this, the above bound is met with equality. 

In addition, the following result was proved in~\cite[Lemma 12]{wang2017switch}.

\begin{theorem} \label{thm:simba}
The simplex code $\mC \le \F_2^n$ of dimension $k$ and length $n=2^k-1$ is a $2^{k-1}$-batch code.
\end{theorem}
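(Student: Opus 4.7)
The plan is to view a generator matrix of $\mC$ as having columns equal to the $2^k-1$ non-zero vectors of $\F_2^k$, i.e., the points of $PG(k-1,2)$. Any two distinct points $\bfu,\bfv\in V:=\F_2^k\setminus\{\bfzero\}$ form a size-two recovery set for their sum $\bfu+\bfv$, because the three collinear points on the line through $\bfu,\bfv$ sum to zero; the singleton $\{\bfe_i\}$ is also a recovery set for $\bfe_i$. I plan to serve every request using only recovery sets of size at most two, since this is what the count $2 \cdot 2^{k-1} = 2^k$ barely allows against the budget $|V|=2^k-1$.

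Given a request multiset $\{\bfe_1^{t_1},\dots,\bfe_k^{t_k}\}$ with $\sum_i t_i = 2^{k-1}$, a natural first attempt is to allocate, for each $i$ with $t_i\geq 1$, the singleton $\{\bfe_i\}$ to one copy of $\bfe_i$ and then pick $t_i-1$ pairwise-disjoint pairs $\{\bfv,\bfv+\bfe_i\}$ among the remaining columns. This naive allocation can fail, however: for $k=3$ and request $\{\bfe_1^2,\bfe_2^2\}$ it is impossible to serve both vectors if both singletons are pre-allocated, yet the assignment $\{100\},\{010,110\}$ for $\bfe_1$ together with $\{001,011\},\{101,111\}$ for $\bfe_2$ shows that the request can be served when the singleton for $\bfe_2$ is suppressed and the column $\bfe_2$ is instead consumed by a pair for $\bfe_1$. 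Hence the selection of which singletons to include must be flexible.

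The plan is to proceed by induction on $k$, with base cases $k\leq 2$ verified by hand. For the inductive step, partition the column set by the last coordinate into $A_0=\{(\bfv,0):\bfv\in\F_2^{k-1}\setminus\{\bfzero\}\}$, which (after dropping the zero) is a generator matrix of the simplex code of dimension $k-1$, and $A_1=\{(\bfv,1):\bfv\in\F_2^{k-1}\}$. Split the request multiset $R$ into two sub-multisets $R_0,R_1$ of total size $2^{k-2}$ each, serve $R_0$ using $A_0$ by the inductive hypothesis, and serve $R_1$ using pairs within $A_1$ together, if needed, with pairs straddling the two cosets for requests of $\bfe_k$ (whose only singleton $\bfe_k=(\bfzero,1)$ lies in $A_1$).

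The main obstacle is choosing the split $R=R_0\sqcup R_1$ so that both halves are realizable. Requests for $\bfe_k$ cannot be served inside $A_0$ alone, and pairs within $A_1$ summing to a fixed $\bfe_i$ with $i<k$ arise from partitions of $\F_2^{k-1}$ into pairs with a single prescribed sum-pattern, so not every distribution of $R_1$ is achievable purely inside $A_1$. I plan to execute a case analysis driven by the shape of $(t_1,\dots,t_k)$: when some $t_i$ is close to the maximum $2^{k-1}$ the problem collapses to the $t$-all-symbol PIR property of $\mC$ proved earlier in this subsection, while in the balanced regime the split goes through cleanly via the inductive hypothesis. This is the strategy executed in detail in~\cite[Lemma 12]{wang2017switch}.
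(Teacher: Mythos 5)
The paper does not actually prove Theorem~\ref{thm:simba}: it states the result and cites \cite[Lemma 12]{wang2017switch}, so there is no in-paper argument to compare your proposal against. Your proposal adds correct and useful preliminaries---that $\{\bfu,\bfv\}$ is a recovery set for $\bfu+\bfv$, the tight count $2\cdot 2^{k-1}=2^k$ versus $2^k-1$ available columns, and the valid $k=3$ counterexample to pre-allocating all singletons---but it is not a proof: you defer the inductive step to the very citation the paper already uses. The crux (choosing the split $R=R_0\sqcup R_1$ and showing $R_1$ can always be realized) is exactly where the difficulty your $k=3$ example exposes reappears one dimension higher, and you do not resolve it.

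Two concrete soft spots in the sketch as written. The even split $|R_0|=|R_1|=2^{k-2}$ cannot be a universal choice, since all $t_k$ copies of $\bfe_k$ are forced into $R_1$ (as $\bfe_k\notin\langle A_0\rangle$) and $t_k$ may exceed $2^{k-2}$. And ``collapsing to the PIR property'' in lopsided cases is not literally available: the all-symbol PIR property only handles requests of the form $\{\bfg_i^t\}$ (repeated copies of one symbol), so a shape like $(2^{k-1}-1,1,0,\dots,0)$ is not covered by it, and the general-purpose transfer in Proposition~\ref{prop:jon} does not apply here since its hypothesis $r\le n/t$ becomes $2\le (2^k-1)/2^{k-1}$, which is false. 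To make this self-contained you must actually supply the matching/case analysis for the split---which is precisely what the cited lemma does.
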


If we consider the all-symbol batch property, then the question of whether $\mC$ is a $t$-all-symbol batch code for $t=2^{k-1}$ remains open. This was posed as a conjecture in~\cite{YAAKOBI2020}:

\begin{conjecture} \label{conj:eit}
    The simplex code $\mC \le \F_2^n$ of dimension $k$ and length $n=2^k-1$ is a $2^{k-1}$-functional batch code.
\end{conjecture}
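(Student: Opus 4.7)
The plan is to reformulate the statement as a combinatorial problem on $\F_2^k$ and attack it via a rainbow-matching argument. Since the columns of a generator matrix of the simplex code are precisely the nonzero vectors of $\F_2^k$, the conjecture becomes: for every multiset $\{\bfv_1,\dots,\bfv_t\}$ of size $t=2^{k-1}$ in $\F_2^k$, there exist pairwise disjoint $R_1,\dots,R_t \subseteq \F_2^k \setminus \{\bfzero\}$ with $\bfv_i \in \langle R_i\rangle$. Over $\F_2$, every minimal recovery set for $\bfv \ne \bfzero$ is a linearly independent set summing to $\bfv$, and $R=\emptyset$ recovers $\bfzero$, so zero requests can be handled for free. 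A counting argument then pinches the nonzero case: since $\sum_i |R_i| \le 2^k-1 = 2t-1$ while each nonzero request needs $|R_i|\ge 1$, the average recovery-set size is strictly less than $2$, forcing at least one singleton $R_i=\{\bfv_i\}$ and essentially all other $R_j$ to be pairs $\{\bfx,\bfx+\bfv_j\}$ of distinct nonzero vectors.

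This reduces the conjecture to a rainbow-matching question in the canonical $1$-factorization of the complete graph on $\F_2^k$: for each $\bfv \ne \bfzero$, the set $M_\bfv:=\{\{\bfx,\bfx+\bfv\}:\bfx\in \F_2^k\}$ is a perfect matching of $\F_2^k$, and the $M_\bfv$'s together partition the edge set. The task becomes: for each request $i$, choose either $\{\bfv_i\}$ or an edge of $M_{\bfv_i}$, so that the choices are pairwise disjoint and use at most $2^k-1$ elements in total. I would proceed as follows: (i) normalize the multiset via the $GL(k,\F_2)$-action, which permutes the $M_\bfv$'s; (ii) case-split on multiplicities---when some $\bfv$ occurs many times, assign $R_i=\{\bfv\}$ as its singleton and use the coset partition of $\F_2^k$ by $\langle \bfv\rangle$ to realize most of the repeated requests, then perform local swaps between pairs of different $M_\bfw$'s to correct for the remaining distinct requests; (iii) for balanced multisets, attempt a Hall-type or greedy argument exploiting the vertex-transitivity of $\F_2^k$.

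The main obstacle is that rainbow-matching problems in $1$-factorizations are notoriously hard and are closely related to the open Ryser--Brualdi conjectures on Latin square transversals; exploiting the additive structure of $\F_2^k$ in a way that works uniformly in $k$ is the crux of the difficulty. As a backup I would try induction on $k$: projecting a multiset in $\F_2^k$ modulo a one-dimensional subspace $\langle \bfw\rangle$ yields a multiset of size $2^{k-1}$ in $\F_2^{k-1}$, but the inductive hypothesis only handles $2^{k-2}$ requests, so one must split the projected multiset into two inductive sub-problems and then lift their solutions compatibly. The lifting step---pairing up the two preimages of each $\F_2^{k-1}$-recovery set in a manner consistent with all $\F_2^k$-requests simultaneously---is where I expect to concentrate most of the effort. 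Failing a clean general induction, combining the structural reductions above with computer search on residual small cases should at least establish the conjecture for further specific dimensions $k$.
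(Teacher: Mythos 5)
The statement you are addressing is presented in the paper as Conjecture~\ref{conj:eit}, explicitly described as \emph{an open conjecture from}~\cite{YAAKOBI2020}. The paper does not prove it. What the paper contributes toward it is a single Proposition (immediately following Theorem~\ref{thm:simba}) that handles the special case where the requested vectors $\bfg_1,\dots,\bfg_\ell$ are linearly independent: one maps them to unit vectors by an invertible $M\in GL(k,\F_2)$, notes via Lemma~\ref{lem:dual} and Lemma~\ref{Invariance_Of_Mul_By_Invertible} that the recovery-set structure is unchanged under this map, and then invokes the known result (Theorem~\ref{thm:simba}, cited from the literature) that the simplex code is a $2^{k-1}$-batch code. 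Your proposal also does not constitute a proof; you yourself acknowledge this by comparing the difficulty to the Ryser--Brualdi conjectures and offering induction and computer search as fallbacks. So there is no argument on either side to compare; both leave the conjecture open.

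A few remarks on your reformulation, since parts of it align with the paper's partial result. Your translation into a rainbow-matching problem in the canonical $1$-factorization of $K_{2^k}$ is a correct and clean way to see the combinatorics: size-two minimal recovery sets for $\bfv$ are exactly edges of $M_\bfv=\{\{\bfx,\bfx+\bfv\}\}$, and the $M_\bfv$ partition the edge set. Your idea of normalizing by the $GL(k,\F_2)$-action is precisely the lever the paper pulls, but the paper uses it as the \emph{entire} argument for the linearly-independent case, not merely as a preliminary normalization step; you underutilize it by treating it as setup for a further case analysis. One inaccuracy worth flagging: your counting step (average recovery-set size below two) does not in fact force ``essentially all'' $R_j$ to be singletons or pairs—larger recovery sets (say triples of independent vectors summing to $\bfv_j$) are permissible as long as they are balanced by extra singletons—so the structural conclusion you draw from that count is stronger than the count supports. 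The genuinely hard case, requests with repetitions or linear dependencies that prevent a single $GL(k,\F_2)$-normalization, is exactly where both your sketch and the paper stop short, and nothing in your proposal closes that gap.
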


Note that since all vectors in $\F_2^k$ appear as columns of the generator matrix of $\mC$, the property of being a $t$-functional batch code is equivalent to being a $t$-all-symbol batch code.

Some progress toward resolving this conjecture has been made in recent years. The best currently known results, to the best of our knowledge, show that the simplex code is a $t$-functional batch code for $t=\left\lfloor \frac{2}{3}\cdot 2^{k-1}\right\rfloor $
by~\cite{yohananov2021}, and for
$
t=\left\lfloor \frac{5}{6}\cdot 2^{k-1}-k \right\rfloor$
by~\cite{yohananov2022}. In addition, it was shown in~\cite{hollmann2023} that the simplex code is a $t$-odd batch code for $t=2^{k-1}$, where an odd batch code refers to the case of serving only vectors of odd weight.

By Lemma~\ref{Invariance_Of_Mul_By_Invertible} and Lemma~\ref{lem:dual}, we derive the following result, which covers additional cases and thus provides further evidence in support of Conjecture~\ref{conj:eit}.

\begin{proposition}
    The simplex code $\mC \le \F_2^n$ of dimension $k$ and length $n=2^k-1$ can serve any list $\{\bfg_1^{t_1},\dots,\bfg_\ell^{t_{\ell}}\}$ with $t_1+\dots+t_{\ell} = 2^{k-1}$ where $\bfg_1,\dots,\bfg_\ell \in \F_2^k$ are such that $\dim(\langle\bfg_1,\dots,\bfg_\ell\rangle)=\ell$. 
\end{proposition}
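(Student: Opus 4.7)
The plan is to reduce the statement to the ordinary batch property of the simplex code, which is already established by Theorem~\ref{thm:simba}. The key feature to exploit is that every generator matrix of the simplex code has as its multiset of columns all nonzero vectors of $\F_2^k$, so left-multiplication by any invertible $M \in \F_2^{k \times k}$ merely permutes the columns and therefore yields another generator matrix of the same code. This enables a change-of-basis argument in the spirit of Lemma~\ref{Invariance_Of_Mul_By_Invertible}, turning a request based on an arbitrary linearly independent list into a request based on the standard basis.

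Concretely, I would first extend $\bfg_1,\dots,\bfg_\ell$ to a basis $\bfg_1,\dots,\bfg_k$ of $\F_2^k$, which is possible since $\dim\langle\bfg_1,\dots,\bfg_\ell\rangle = \ell$, and let $M$ be the unique invertible matrix with $M\bfg_i = \bfe_i$ for every $i \in [k]$. By Theorem~\ref{thm:simba} there exists a generator matrix $G^*$ of the simplex code satisfying the $2^{k-1}$-batch property, so, viewing $\{\bfe_1^{t_1},\dots,\bfe_\ell^{t_\ell}\}$ as a batch request with $t_{\ell+1}=\cdots=t_k=0$, there exist pairwise disjoint recovery sets $R_1,\dots,R_{2^{k-1}}\subseteq[n]$ serving this list with the columns of $G^*$. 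I then set $G := M^{-1}G^*$: since $M^{-1}$ is a bijection on $\F_2^k\setminus\{\bfzero\}$, the columns of $G$ are again all nonzero vectors of $\F_2^k$, so $G$ is still a generator matrix of the simplex code. Applying $M^{-1}$ to each identity $\bfe_{i_j} \in \langle (G^*)_s : s \in R_j\rangle$ yields $\bfg_{i_j} = M^{-1}\bfe_{i_j} \in \langle G_s : s \in R_j\rangle$, so the same disjoint sets $R_1,\dots,R_{2^{k-1}}$ serve the original list $\{\bfg_1^{t_1},\dots,\bfg_\ell^{t_\ell}\}$ using $G$.

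The only delicate point is the invariance step. Lemma~\ref{Invariance_Of_Mul_By_Invertible} is phrased for all-symbol PIR/batch, whose request lists depend intrinsically on the generator matrix, but the underlying argument (transporting recovery sets by an invertible linear map) carries over verbatim: both the request vectors and the spans in which they lie are moved by the same map $M^{-1}$, so disjointness and membership are preserved. Once this invariance is in place, Theorem~\ref{thm:simba} supplies all the combinatorial heavy lifting, and no further ingredient beyond the bijectivity of $M^{-1}$ on $\F_2^k\setminus\{\bfzero\}$ is needed.
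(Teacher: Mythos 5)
Your proposal is correct and takes essentially the same approach as the paper: both reduce the claim to Theorem~\ref{thm:simba} by an invertible change of basis that sends the standard basis vectors to the given linearly independent vectors, noting that the resulting matrix remains a generator of the simplex code. The only cosmetic difference is that you spell out the transport of recovery sets by hand (\`a la Lemma~\ref{Invariance_Of_Mul_By_Invertible}), whereas the paper compresses this step into a reference to Lemma~\ref{lem:dual}.
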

\begin{proof}
    Let $G = (I_k \mid A) \in \mathbb{F}_2^{k \times n}$ be a systematic generator matrix of the code $\mathcal{C}$. By Theorem~\ref{thm:simba}, any multiset of requests involving only the first $k$ columns of $G$ can be served. By Lemma~\ref{lem:dual}, the dual code $\mathcal{C}^\perp$ is invariant under changes of the generator matrix. Therefore, this ability to serve requests made from the first $k$ columns holds for any generator matrix of $\mathcal{C}$. In particular, for any invertible matrix $M \in \mathbb{F}_2^{k \times k}$, the matrix $MG = (M, MA)$ is also a generator matrix of $\mathcal{C}$. In this representation, the first $k$ columns correspond to the rows of $M$, which are linearly independent. Thus, any multiset of requests involving linearly independent vectors can be served by $\mathcal{C}$.
\end{proof}

\section{Discussion and Future Directions}\label{Sec:Conclustion}

{In this paper, we study codes with the property that $t$ (not necessarily distinct) symbols of a codeword can be recovered from pairwise disjoint sets of codeword symbols. We distinguish two settings: recovering the same symbol $t$ times, leading to \emph{$t$-all-symbol PIR codes}, and recovering an arbitrary multiset of $t$ symbols, leading to \emph{$t$-all-symbol batch codes}. These notions unify and generalize several previously studied code properties, including one-step majority-logic decodable codes, (functional) PIR codes, and (functional) batch codes.
Our main contributions are the following: we determine the minimum length required for a code of fixed dimension to satisfy these properties for some small values of $t$, we characterize structural properties of the generator matrices of codes achieving this optimal length, and we provide bounds and insights into how well a code with fixed length, dimension, and other parameters can satisfy these recovery requirements. While we make progress towards the understanding of these code families, a number of interesting questions remain open:
\begin{enumerate}
\item In this work we determine the minimum length of $t$-all-symbol PIR and batch codes for small values of $t$ (namely $t \in \{1,2,3\}$ and partial results for $t=4$). It remains open to characterize, or at least bound, the minimum length of optimal codes with small dimension $k$. It would be particularly interesting to understand whether the behavior for small $k$ aligns with that of standard PIR and batch codes, or whether additional redundancy is needed to achieve the all-symbol recovery property.
\item Another natural direction is to study the asymptotic behavior of the minimum length of optimal all-symbol PIR and batch codes as either $t$ or $k$ grows.
\item Our results for $t=4$ still leave a gap; future work could focus on determining exact values in this case.
\item Most of our results for $ASB(k,t,q)$ and $ASP(k,t,q)$ do not depend on the alphabet size $q$. Intuitively, however, one expects that increasing $q$ should lead to shorter $t$-all-symbol batch or PIR codes for a fixed dimension $k$. Understanding how $q$ influences $ASB(k,t,q)$ and $ASP(k,t,q)$ (for example through bounds that take into account $q$) is an interesting direction for future work.
\item Finally, investigating the $t$-all-symbol PIR and batch properties for additional families of well-known codes (such as Hamming and Reed--Muller codes) remains open. In particular, although our approach resolves further cases of Conjecture~\ref{conj:eit}, several instances of the conjecture remain unresolved.
\end{enumerate}
}

\bibliographystyle{ieeetr}
\bibliography{ourbib}
\clearpage

\end{document}